\documentclass[11pt]{article}
\newcommand{\LIPICS}[1]{}
\newcommand{\PAPER}[1]{#1}

\PAPER{

\usepackage{fullpage,amsmath,amsthm,amssymb,hyperref,graphicx}

\newtheorem{theorem}{Theorem}[section]
\newtheorem{lemma}[theorem]{Lemma}
\newtheorem{corollary}[theorem]{Corollary}
\newtheorem{problem}{Problem}[section]

\title{Faster Algorithms for Largest Empty Rectangles and Boxes}

\author{Timothy M. Chan\thanks{
  Department of Computer Science,
  University of Illinois at Urbana-Champaign
  (tmc@illinois.edu).  
  This research has been supported in part by NSF Grant CCF-1814026.
}}

}

\LIPICS{

\usepackage[utf8]{inputenc}

\usepackage{microtype,hyperref,graphicx}
   \hypersetup{%
      breaklinks,%
      ocgcolorlinks, colorlinks=true,%
      urlcolor=[rgb]{0.25,0.0,0.0},%
      linkcolor=[rgb]{0.5,0.0,0.0},%
      citecolor=[rgb]{0,0.2,0.445},%
      filecolor=[rgb]{0,0,0.4},
      anchorcolor=[rgb]={0.0,0.1,0.2}%
   }

\newtheorem{problem}{Problem}
\renewcommand{\paragraph}[1]{\subparagraph*{#1}}

\title{Faster Algorithms for Largest Empty Rectangles and Boxes}

\author{Timothy M. Chan}{Department of Computer Science, University of Illinois at Urbana-Champaign, USA}{tmc@illinois.edu}{https://orcid.org/0000-0002-8093-0675}{}%

\funding{Supported in part by NSF Grant CCF-1814026.}

\authorrunning{T.\,M. Chan}
\Copyright{Timothy M. Chan}

\ccsdesc[100]{Theory of computation~Computational geometry}

\keywords{Largest empty rectangle, largest empty box, Klee's
measure problem}



\relatedversion{A full version of the paper is available at \url{https://arxiv.org/abs/papernumber}.}

\EventEditors{Kevin Buchin and \'{E}ric Colin de Verdi\`{e}re}
\EventNoEds{2}
\EventLongTitle{37th International Symposium on Computational Geometry (SoCG 2021)}
\EventShortTitle{SoCG 2021}
\EventAcronym{SoCG}
\EventYear{2021}
\EventDate{June 7--11, 2021}
\EventLocation{Buffalo, NY, USA}
\EventLogo{socg-logo.pdf}
\SeriesVolume{189}

}

\newcommand{\OO}{\widetilde{O}}
\newcommand{\R}{\mathbb{R}}
\newcommand{\eps}{\varepsilon}
\newcommand{\D}{\Delta}
\newcommand{\up}[1]{\left\lceil #1\right\rceil}
\newcommand{\down}[1]{\left\lfloor #1\right\rfloor}
\newcommand{\IGNORE}[1]{}

\begin{document}

\maketitle

\begin{abstract}
We revisit a classical problem in computational geometry: finding the largest-volume axis-aligned empty box (inside a given bounding box) amidst $n$ given points in $d$ dimensions.  Previously, the best algorithms known have running time $O(n\log^2n)$ for $d=2$ (by Aggarwal and Suri [SoCG'87]) and near $n^d$ for $d\ge 3$.  We describe faster algorithms with running time

\begin{itemize}
\item $O(n2^{O(\log^*n)}\log n)$ for $d=2$,
\item $O(n^{2.5+o(1)})$ time for $d=3$, and
\item $\widetilde{O}(n^{(5d+2)/6})$ time for any constant $d\ge 4$.
\end{itemize}

\noindent
To obtain the higher-dimensional result, we adapt and extend previous techniques for Klee's measure problem to optimize certain objective functions over the complement of a union of orthants.
\end{abstract}

\section{Introduction}

\paragraph{Two dimensions.}
In the first part of this paper, we tackle the \emph{largest empty rectangle} problem:
Given a set $P$ of $n$ points in the plane and a fixed rectangle $B_0$, find the largest rectangle $B\subset B_0$ such that $B$ does not contain
any points of $P$ in its interior.  Here and throughout this paper,
a ``rectangle'' refers to an axis-parallel rectangle; and unless
stated otherwise,
``largest'' refers to maximizing the area.

The problem has been studied since the early years of computational geometry.  While 
similar basic problems such as largest empty circle or largest empty square can be solved efficiently using Voronoi diagrams, the largest empty rectangle
problem seems more challenging.  The earliest reference on the 2D problem 
appears to be
by Naamad, Lee, and Hsu in 1984~\cite{Naa}, who gave a quadratic-time algorithm.  In 1986, Chazelle, Drysdale,
and Lee~\cite{ChDrLe} obtained an $O(n\log^3 n)$-time algorithm.
Subsequently, at SoCG'87, Aggarwal and Suri~\cite{AggSur} 
presented another algorithm requiring
$O(n\log^3n)$ time, followed by a more complicated second algorithm
requiring $O(n\log^2n)$ time.
The $O(n\log^2n)$ worst-case bound has not been improved since.\footnote{Aggarwal and Suri's
first algorithm can be sped up to run in near $O(n\log^2n)$ time as well, 
since it relied on a subroutine for finding row minima in Monge staircase matrices, a problem for which  improved results were later found~\cite{KlaKle,ChaSODA21}; but these results do not
appear to lower the cost of Aggarwal and Suri's second algorithm.}


A few results on related questions have been given.
Dumitrescu and Jiang~\cite{DumJia} examined
the combinatorial problem of determining the worst-case number of maximum-area empty rectangles 
and proved an $O(n2^{\alpha(n)}\log n)$ upper bound;
their proof does not appear to have any implication to the 
algorithmic problem of finding a maximum-area empty rectangle.
If the objective is changed to maximizing the perimeter, the problem
is a little easier and an optimal $O(n\log n)$-time algorithm can already be
found in Aggarwal and Suri's paper~\cite{AggSur}.  Another related
problem of computing a maximum-area rectangle contained in a polygon
has also been explored~\cite{DaMiRo}.

We obtain a new randomized algorithm that finds the maximum-area empty rectangle in
$O(n2^{O(\log^*n)} \log n)$ expected time.  This is not only
an improvement of almost a full logarithmic factor over the
previous 33-year-old bound, but is also close to optimal,
except for the slow-growing iterated-logarithmic-like factor
(as $\Omega(n\log n)$ is a lower bound in the algebraic decision tree
model).  

Our solution interestingly uses \emph{interval trees} to efficiently
divide
the problem into subproblems of logarithmic size, yielding a recursion
with $O(\log^*n)$ depth.

\paragraph{Higher dimensions.}
The higher-dimensional analog of the problem is \emph{largest empty box}:
Given a set $P$ of $n$ points in $\R^d$ and a fixed box $B_0$, find the largest box $B\subset B_0$ such that $B$ does not contain
any points of $P$ in its interior.  Here and throughout this paper,
a ``box'' refers to an axis-parallel hyperrectangle; and unless
stated otherwise,
``largest'' refers to maximizing the volume.

Several papers~\cite{DumJia13,DumJia,DumJia16,Ull} have
studied related questions in
higher dimensions, e.g., 
proving combinatorial bounds on the number of optimal boxes, or
proving extremal bounds on the volume, or
designing approximation algorithms.
For the original (exact) computational problem, it is not difficult
to obtain an algorithm that finds the largest empty box in $\OO(n^d)$ time (for example,
as was done by Backers and Keil~\cite{BacKei}).\footnote{Throughout the paper,
$\OO$ notation hides polylogarithmic factor.}
At the end of their SoCG'16 paper,
Dumitrescu and Jiang~\cite{DumJia} explicitly asked whether
a faster algorithm is possible:
\begin{quote}
``Can a maximum empty box in $\R^d$
for some fixed $d\ge 3$ be computed in 
$O(n^{\gamma_d})$ time
for some constant $\gamma_d < d$?''
\end{quote}
Dumitrescu and Jiang attempted to give a subcubic algorithm 
for the 3D problem, but their conditional solution required a sublinear-time
dynamic data structure for finding the 2D maximum empty rectangles containing a query point---currently, the existence of such a data structure is not known.

On the lower bound side, Giannopoulos, Knauer,
Wahlstr\"om, and Werner~\cite{Gia} proved that the largest
empty box problem is $W[1]$-hard with respect to the dimension.
This implies a conditional lower bound of $\Omega(n^{\beta d})$ for some
absolute constant $\beta>0$, 
assuming a popular conjecture on the hardness of the clique problem.

We answer the above question affirmatively.  For $d=3$, we give
an $O(n^{5/2+\eps})$-time algorithm, where $\eps>0$ is
an arbitrarily
small constant.  For higher constant $d\ge 4$, we obtain
an algorithm with an intriguing time bound that improves over $n^d$ even more dramatically: $\OO(n^{(5d+2)/6})$.  
For example, the bound is $O(n^{3.667})$ for $d=4$, $\OO(n^{4.5})$ for $d=5$, and $O(n^{8.667})$ for $d=10$.

Not too surprisingly, our 3D algorithm achieves subcubic complexity by
applying standard range searching data structures (though the application is not be immediately obvious).
Dynamic data structures are not used. 
 
The techniques for our higher-dimensional algorithm
are perhaps more original and significant, with potential impact to other
problems.  We first transform the largest empty box problem 
into a problem about a union of $n$ orthants in $D=2d$ dimensions (the
transformation is simple and has been exploited before, such as in
\cite{Bar}).  The union of orthants is known to have worst-case combinatorial
complexity $O(n^{\down{D/2}})$~\cite{Boi}.  Interestingly, we show that it is possible to
maximize certain types of objective functions over the complement
of the union, in time significantly smaller than the worst-case combinatorial
complexity.

We accomplish this by adapting known techniques on 
Klee's measure problem \cite{OveYap,ChaSoCG08,Bri,ChaFOCS13}.
Specifically, we build on a remarkable method by Bringmann~\cite{Bri} for
computing the volume of a union
of $n$ orthants in $D$ dimensions in $O(n^{D/3+O(1)})$
time (the $O(1)$ term in the exponent was $2/3$ but has been later removed
by author~\cite{ChaFOCS13}).  However, maximizing an objective function over the complement of the union is different from
summing or integrating a function, and Bringmann's method does not
immediately generalize to the former (for example, it exploits subtraction).  We introduce extra ideas to extend the method,
which results in a bigger time bound than $n^{D/3}=n^{2d/3}$ but nevertheless
beats $n^{D/2}=n^d$.  In particular, we use some simple graph-theoretical
arguments, applied to graphs with $O(D)$ vertices.

\PAPER{
\paragraph{Paper organization.}
We present our 2D algorithm in Section~\ref{sec:2d}, our 3D algorithm in 
Section~\ref{sec:3d}, and 
our higher-dimensional algorithms in Sections \ref{sec:anchored}--\ref{sec:box}
(all these parts may be read independently).
}
\LIPICS{
\paragraph{Organization.}
We present our 2D algorithm in Sec.~\ref{sec:2d}, our 3D algorithm in 
the full paper, and 
our higher-dimensional algorithms in Sec.~\ref{sec:anchored}--\ref{sec:box}
(all these parts may be read independently).
}


\section{\PAPER{Largest Empty Rectangle in 2D}\LIPICS{Largest empty rectangle in 2D}}\label{sec:2d}

\newcommand{\RAY}{\makebox[0pt]{\raisebox{5.5pt}{\tiny\ \ \ $\leftarrow$}}\gamma} 

As in previous work~\cite{ChDrLe,AggSur}, we focus on
solving a \emph{line-restricted} version of the 2D largest empty rectangle problem: given a set
$P$ of $n$ points below a fixed horizontal line $\ell_0$ and a set $Q$ of $n$ points above $\ell_0$, where the $x$-coordinates of all points
have been pre-sorted, and given a rectangle $B_0$, find
the largest-area rectangle $B\subset B_0$ that intersects $\ell_0$ and is 
empty of points of $P\cup Q$.
By standard divide-and-conquer,
an $O(T(n))$-time algorithm for
the line-restricted problem immediately yields
an $O(T(n)\log n)$-time algorithm for the original largest empty rectangle
problem,
assuming that $T(n)/n$ is nondecreasing.


We begin by reformulating the line-restricted problem as a problem about horizontal line segments.  In the subsequent subsections, we will work with this re-formulation.

For each point $p\in P$, let $s(p)$ be the longest horizontal line segment
inside $B_0$ such that $s(p)$ passes through $p$ and
there are no points of $P$ above $s(p)$.  See Figure~\ref{fig:cart}(a).
We can compute $s(p)$ for all $p\in P$ in $O(n)$ time: this step is equivalent to the construction of the standard \emph{Cartesan tree}~\cite{Vui,GaBeTa}, for which there are simple linear-time algorithms
(for example, by 
inserting points
from left to right and maintaining a stack, like Graham's scan,
as also re-described in previous papers~\cite{ChDrLe,AggSur}).
Similarly, for each $q\in Q$, let $t(q)$ be the longest horizontal line segment inside $B_0$ such that $t(q)$ passes through $q$ and
there are no points of $Q$ below $t(q)$.
We can also compute $t(q)$ for all $q\in Q$ in $O(n)$ time.


\begin{figure}
\begin{center}
\includegraphics[scale=0.72]{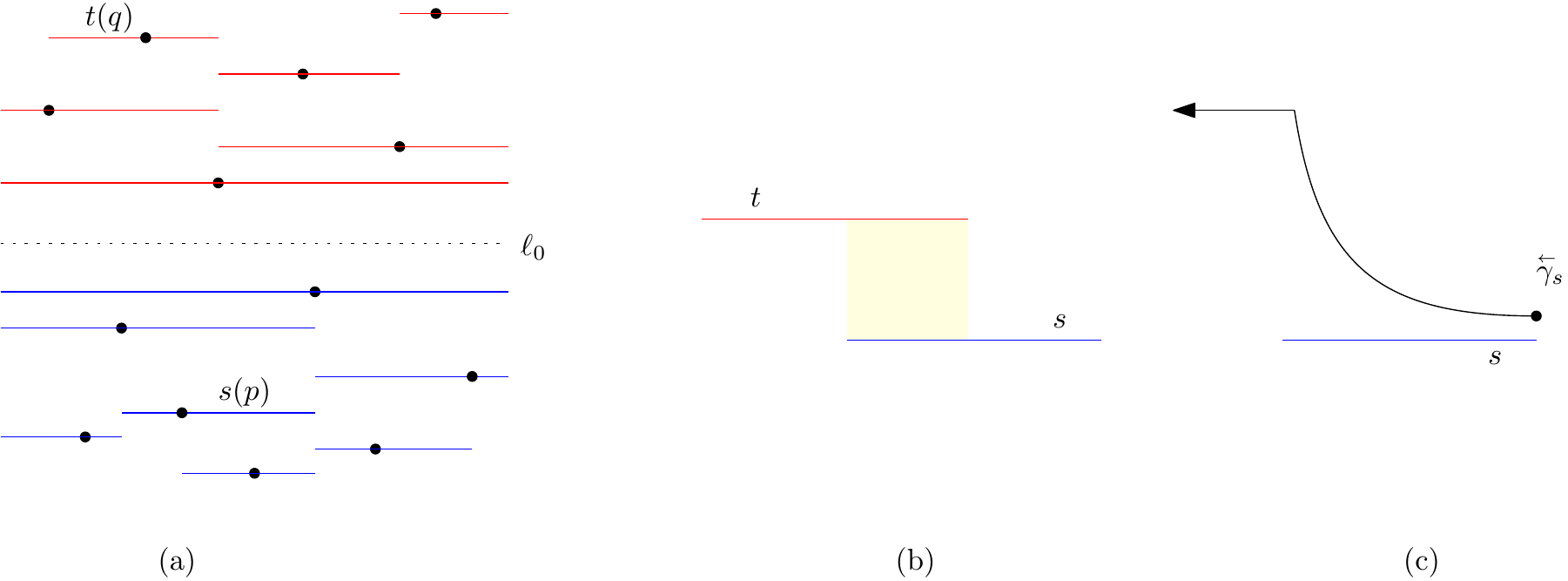}
\end{center}
\caption{(a,b) Transforming points into horizontal segments. (c) Pseudo-ray $\RAY_s$.}\label{fig:cart}
\end{figure}

For a horizontal segment $s$, let $x_s^-$ and $x_s^+$ denote the $x$-coordinates
of its left and right endpoints respectively, and let $y_s$ denote its $y$-coordinate.
We say that a set $S$ of horizontal segments is \emph{laminar} if for every $s,s'\in S$, either the two intervals
$[x_s^-,x_s^+]$ and $[x_{s'}^-,x_{s'}^+]$ are disjoint, or
one interval is contained in the other
(in other words, the intervals form a ``balanceded parentheses'' or tree structure).
It is easy to see that for the segments defined above,
$\{s(p):p\in P\}$ is laminar and $\{t(q):q\in Q\}$ is laminar.

The optimal rectangle must have some point $p^*\in P$ on its bottom side
and some point $q^*\in Q$ on its top side (except when
the optimal rectangle touches the bottom or top side of $B_0$, a case that can be easily dismissed in linear time).
Chazelle, Drysdale, and Lee~\cite{ChDrLe} already noted that the case when
$[x^-_{s(p^*)},x^+_{s(p^*)}]$ is contained in $[x^-_{t(q^*)},x^+_{t(q^*)}]$ can be handled in $O(n)$ time
(in their terminology, this is the case of ``three supports
in one half, one in the other'').\footnote{
The solution is simple: for each $p\in P$, we find the lowest point $q_p\in Q$ with $x$-coordinate in the interval $[x^-_{s(p)},x^+_{s(p)}]$, and take the maximum of $(x^+_{s(p)}-x^-_{s(p)})(y_{t(q_p)} - y_{s(p)})$.
All these lowest points $q_p$ can be found ``bottom-up'' in the tree
formed by the intervals $\{[x^-_{s(p)},x^+_{s(p)}]: p\in P\}$, in linear total time.
}
The key remaining case is when
$x_{t(q^*)}^- < x_{s(p^*)}^- < x^+_{t(q^*)} < x^+_{s(p^*)}$,
where the area of the optimal rectangle is
$(x^+_{t(q^*)} - x_{s(p^*)}^-)(y_{t(q^*)}-y_{s(p^*)})$.
All other cases are symmetric.  
The problem is thus reduced to the following (see Figure~\ref{fig:cart}(b)):

\begin{problem}\label{prob1}
Given a laminar set $S$ of $n$ horizontal segments and a laminar set $T$ of $n$ horizontal segments, where all $x$-coordinates have been pre-sorted, find a pair $(s,t)\in S\times T$ such that
$x_t^- < x_s^- < x_t^+ < x_s^+$, maximizing
$(x_t^+ - x_s^-)(y_t-y_s)$.
\end{problem}

We find it more convenient to work with the corresponding \emph{decision problem}, as stated below.
By the author's randomized optimization technique~\cite{ChaSoCG98}, an $O(T(n))$-time
algorithm for Problem~\ref{prob2} yields an $O(T(n))$-expected-time
algorithm for Problem~\ref{prob1}, assuming that $T(n)/n$ is nondecreasing:

\begin{problem}\label{prob2}
Given a laminar set $S$ of $n$ horizontal segments and a laminar set $T$ of $n$ horizontal segments, where all $x$-coordinates have been pre-sorted,  and given a value $r>0$, 
decide if there exists a pair $(s,t)\in S\times T$ such that
$x_t^- < x_s^- < x_t^+ < x_s^+$ and 
$(x_t^+ - x_s^-)(y_t-y_s) > r$, and if so, report one such pair.  We call such a pair
\emph{good}.
\end{problem}

\subsection{Preliminaries}



To help solve Problem~\ref{prob2},
we define a curve $\gamma_s$ for each $s\in S$:
\[ \gamma_s(x) = \left\{\begin{array}{ll}
\frac{r}{x-x_s^-} + y_s & \mbox{if $x\ge x_s^-+\delta$}\\
M +x_s^-  & \mbox{if $x < x_s^-+\delta$,}
\end{array}\right.
\]
for a sufficiently
small $\delta>0$ and a sufficiently large $M=M(\delta)$.
(The main first part of the curve is a hyperbola.)
The condition $(x_t^+ - x_s^-)(y_t-y_s) > r$ is met iff the point
$(x_t^+,y_t)$ (i.e., the right
endpoint of $t$) is above the curve $\gamma_s$, assuming $x_t^+\ge x_s^- +\delta$.
Note that these curves form a family of \emph{pseudo-lines}, i.e., every pair
of curves intersect at most once: this can be seen from the fact that for any two curves $\gamma_s$ and $\gamma_{s'}$ with $x_s^-\ge x_{s'}^-$,
the difference
$\gamma_{s}(x)-\gamma_{s'}(x) = \tfrac{r(x_{s}^- - x_{s'}^-)}{(x-x_s^-)(x-x_{s'}^-)} + y_s - y_{s'}$ is nonincreasing for $x\ge x_s^-$.

Define the curve segment $\RAY_s$ to be the part of $\gamma_s$ restricted
to $x\le x_s^+$.  (See Figure~\ref{fig:cart}(c).)  These curve segments form a family of \emph{pseudo-rays}.  The lower envelope of $n$ pseudo-rays has at most $2n$ edges,
by known combinatorial bounds on order-2 Davenport-Schinzel sequences~\cite{ShaAgaBOOK}.  The following lemma summarizes known subroutines we  need on the computation of lower envelopes (proofs are briefly sketched).



\begin{lemma}\label{lem:LE}
Consider a set of $n$ pseudo-lines, sorted by their
\emph{pseudo-slopes}, such that if $\gamma$ and $\gamma'$ intersects and $\gamma$ has smaller pseudo-slope, then $\gamma$ is above $\gamma'$ to the left of the intersection.  Assume that the intersection of any two pseudo-lines
can be computed in constant time.
\begin{enumerate}
\item[\rm (a)] Consider $n$ pseudo-rays that are parts of the given pseudo-lines,
such that the $x$-coordinates of the left endpoints are all $-\infty$,
and the $x$-coordinates of the right endpoints
are monotone (increasing or decreasing) in the pseudo-slopes.
Then the lower envelope of these pseudo-rays can be computed in $O(n)$ time.
\item[\rm (b)]
Consider $n$ pseudo-segments that are parts of the given pseudo-lines,
such that $x$-coordinates of the left endpoints are monotone in the pseudo-slopes and the $x$-coordinates of the right endpoints are monotone in the pseudo-slopes.
Then the lower envelope of these pseudo-segments can be computed in $O(n)$ time.
\end{enumerate}
\end{lemma}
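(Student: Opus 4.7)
My plan for both parts is a Graham-scan-style stack sweep that exploits the monotonicity of the pseudo-slopes. For part (a), first construct the lower envelope of the $n$ underlying pseudo-lines (ignoring the right-endpoint truncations). Since a smaller-pseudo-slope curve is above a larger-pseudo-slope curve to the left of their intersection and below to the right, the lower envelope of the pseudo-lines, read from left to right, visits them in decreasing pseudo-slope order. Process the pseudo-lines in that order and maintain a stack whose top holds the rightmost arc of the envelope built so far: when inserting a new pseudo-line $\gamma$, repeatedly pop the top $\gamma'$ whenever the intersection of $\gamma$ with $\gamma'$ lies to the left of the intersection of $\gamma'$ with its predecessor on the stack, then push $\gamma$. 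Each intersection is $O(1)$ by assumption, and the usual amortized argument gives $O(n)$ total time.

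To incorporate the right-endpoint truncation, observe that the envelope visits pseudo-lines in monotone pseudo-slope order, so the monotonicity of the right endpoints in the pseudo-slope implies that the right endpoints of the pseudo-rays encountered along the envelope appear in monotone $x$-order as well. A single linear pass through the envelope then suffices: either each arc is trimmed in place at the right endpoint of the pseudo-ray that created it, or (if the monotonicity direction is opposite to the envelope direction) the tail of the envelope beyond that $x$-value is discarded. Either variant runs in $O(n)$ time.

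For part (b), I would reduce to part (a) by first constructing the full-pseudo-line lower envelope as above and then applying two trimming passes, one for the right endpoints and one for the left endpoints. By the monotonicity of both endpoint sequences in the pseudo-slope, each trimming pass is again a monotone linear scan along the envelope, for $O(n)$ total time. The main subtlety, which I regard as the principal obstacle, is aligning the two possible monotonicity directions of the endpoints with the two possible orientations of the envelope; this can be handled uniformly by a small case analysis, or equivalently by reversing the $x$-axis or the pseudo-slope order, after which each trimming reduces to a textbook monotone merge.
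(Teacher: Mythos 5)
Your stack sweep for the lower envelope of the \emph{full} pseudo-lines is fine, but the reduction you build on top of it---``construct the envelope of the untruncated pseudo-lines, then trim at the endpoints''---does not work, and this is a genuine gap rather than a detail. The lower envelope of the pseudo-rays is not a sub-structure of the lower envelope of the pseudo-lines: once a low curve is clipped away at its right endpoint, the envelope to the right of that endpoint is the minimum over the \emph{surviving} curves, and the curve achieving that minimum may be one that never appears on the full-line envelope at all. Concretely, take the three lines $\gamma_1:y=10x$, $\gamma_2:y=5x+1$, $\gamma_3:y=0$ (pseudo-slopes $10,5,0$) with right endpoints $-1,-0.5,-0.3$ respectively; the endpoints are monotone in the slopes, as the lemma requires. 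The full-line envelope is $\gamma_1$ for $x\le 0$ and $\gamma_3$ for $x\ge 0$, so $\gamma_2$ is discarded by your first pass. But the pseudo-ray envelope is $\gamma_1$ on $(-\infty,-1]$, then $\gamma_2$ on $[-1,-0.5]$, then $\gamma_3$ on $[-0.5,-0.3]$. No amount of trimming or tail-discarding of the two-arc line envelope can recover the $\gamma_2$ arc, since the information that $\gamma_2$ exists has been thrown away. The same objection applies, a fortiori, to your part (b), which clips at both ends.

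The fix is to interleave the truncation with the insertion rather than postponing it: insert the pseudo-rays in decreasing order of their right endpoints' $x$-values and maintain only the portion of the current envelope to the left of the right endpoint of the ray being inserted (freezing and outputting the piece between consecutive right endpoints). In that regime every ray already inserted is still alive throughout the maintained region, and the monotonicity of right endpoints in the pseudo-slopes guarantees that the new ray displaces a contiguous prefix or suffix of the stack, giving $O(n)$ total time. In the example above, $\gamma_2$ is still on the stack when the sweep reaches $x=-1$, so its arc is found. For part (b) a further idea is needed, since segments also have finite left endpoints: one can greedily pierce the $x$-projections of the segments with a minimal set of vertical lines (place a line at the leftmost right endpoint, delete the pierced segments, repeat); by the two monotonicity assumptions each segment meets at most two of the resulting slabs, and within each slab the segments behave as leftward and rightward pseudo-rays, so two applications of part (a) per slab suffice.
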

\begin{proof}
Part (a) follows by a straightfoward variant of Graham's scan~\cite{BerBOOK} (originally for computing planar convex hulls, or by
duality, lower envelopes of lines).  We insert pseudo-rays in decreasing order of their right endpoints' $x$-values, while maintaining
the portion of the lower envelope to the left of the right endpoint of
the current pseudo-ray.  
In each iteration, by the monotonicity assumption, 
a prefix or suffix of the lower envelope gets deleted (i.e., popped from a stack). 

For part (b), the main case is when both
the left and right endpoints are monotonically increasing in the pseudo-slopes (the case when both are monotonically decreasing is symmetric,
and the case when they are monotone in different directions easily
reduces to two instances of the pseudo-ray case).
Greedily construct a minimal set of vertical lines that stab all 
the pseudo-segments: namely, draw a vertical line at the leftmost right endpoint, remove all pseudo-segments stabbed, and repeat.  This process
can be done in $O(n)$ time by a linear scan.  These vertical lines
divide the plane into slabs.  (See Figure~\ref{fig:fifo}.)  In each slab, the pseudo-segments
behave like pseudo-rays, so we can compute the lower envelope inside
the slab in linear time by applying part (a) twice, for the leftward rays and for the rightward
rays (the two envelopes can be merged in linear time).
Since each pseudo-segment participates in at most two slabs, the total time is linear. 
\end{proof}

\begin{figure}
\begin{center}
\includegraphics[scale=0.6]{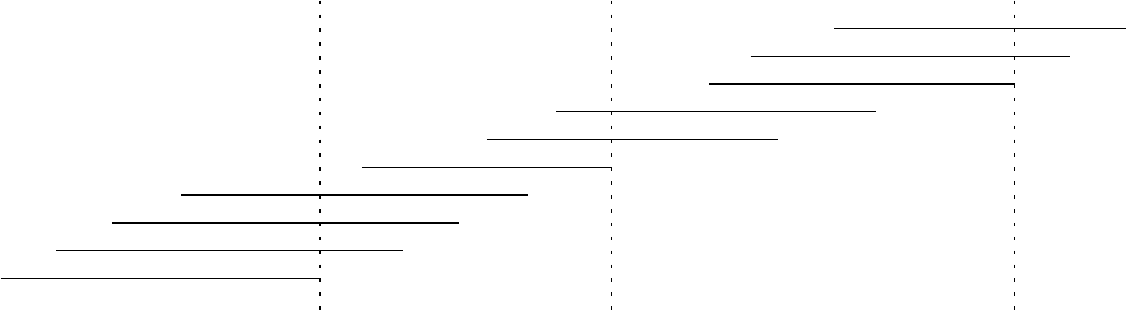}
\end{center}
\caption{Proof of Lemma~\ref{lem:LE}(b): the $x$-projected intervals and the division into slabs.}\label{fig:fifo}
\end{figure}

As an application of Lemma~\ref{lem:LE}(b), we mention an efficient algorithm for
a special case of Problem~\ref{prob2}, which will be useful later.

\begin{corollary}\label{cor}
In the case when all segments in $S$ and $T$ intersect a fixed vertical line, Problem~\ref{prob2} can be solved in $O(n)$ time.
\end{corollary}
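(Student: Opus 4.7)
The plan is to exploit the chain structure that the vertical-line hypothesis forces on $S$ and $T$, and to bake all three constraints of Problem~\ref{prob2} into pseudo-segments so that Lemma~\ref{lem:LE}(b) applies directly. Since every segment in $S$ contains the fixed vertical line's $x$-coordinate, laminarity makes the intervals $\{[x_s^-,x_s^+]:s\in S\}$ totally ordered by containment; sort $S=\{s_1,\ldots,s_n\}$ by increasing $x_{s_i}^-$, so $x_{s_i}^+$ decreases in $i$, and likewise sort $T=\{t_1,\ldots,t_n\}$. From the discussion before Lemma~\ref{lem:LE}, larger $x_s^-$ corresponds to smaller pseudo-slope, so the pseudo-slope of $\gamma_{s_i}$ is monotonically decreasing in~$i$.

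Next, I would observe that for each $t_j$ the set of $s_i$ satisfying $x_{t_j}^-<x_{s_i}^-$ and $x_{t_j}^+<x_{s_i}^+$ is a contiguous index range $[l_j,r_j]$ of $S$, since the first inequality cuts off a prefix and the second a suffix of the chain; moreover $l_j,r_j$ are both non-decreasing in $j$, and a two-pointer sweep computes them in $O(n)$ time. Dually, for each $s_i$ the set of valid $t_j$ is an interval $[K_i,J_i]$ with $K_i,J_i$ non-decreasing in~$i$. Now define the pseudo-segment $\sigma_i$ as $\gamma_{s_i}$ restricted to $x\in[x_{t_{J_i}}^+,x_{t_{K_i}}^+]$ (empty when $K_i>J_i$). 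By strict monotonicity of $x_{t_j}^+$ in $j$, the right endpoint $(x_{t_j}^+,y_{t_j})$ lies within the $x$-range of $\sigma_i$ iff $j\in[K_i,J_i]$, i.e.\ iff $(s_i,t_j)$ satisfies conditions~(1) and~(2); and lying above $\gamma_{s_i}$ at $x=x_{t_j}^+$ is exactly condition~(3). So a good pair exists iff some $(x_{t_j}^+,y_{t_j})$ lies above the lower envelope $L$ of $\{\sigma_i\}_i$, and a witness index $i$ is immediate from the piece of $L$ hit.

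The key step is verifying the monotonicity that Lemma~\ref{lem:LE}(b) requires. Since $K_i,J_i$ are non-decreasing in $i$ while $x_{t_j}^+$ is decreasing in $j$, both endpoints $x_{t_{J_i}}^+$ and $x_{t_{K_i}}^+$ of $\sigma_i$ are non-increasing in $i$; the pseudo-slope is also non-increasing in $i$; so both endpoints are monotone in pseudo-slope, and Lemma~\ref{lem:LE}(b) produces $L$ in $O(n)$ time. Because both $L$ and the list $(x_{t_j}^+)_{j=1}^n$ are sorted by $x$-coordinate, a single linear merge tests each $(x_{t_j}^+,y_{t_j})$ against $L$ in $O(n)$ additional time. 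The main obstacle is exactly this bookkeeping — arranging the two sliding windows $[l_j,r_j]$ and $[K_i,J_i]$ to align with the pseudo-slope ordering so that all three constraints of Problem~\ref{prob2} get absorbed into one pseudo-segment per $s_i$ — but the chain structure forced by the common vertical line makes every direction line up cleanly.
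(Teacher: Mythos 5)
Your construction is essentially the paper's: sort each laminar family into a nested chain, compute for each $s_i$ a contiguous window of admissible indices $j$ by a linear two-pointer scan, clip $\gamma_{s_i}$ to the corresponding $x$-interval, build the lower envelope of the clipped pieces via Lemma~\ref{lem:LE}(b), and test the right endpoints of $T$ against it by a linear merge. The one genuine issue is that you only ever enforce two of the three ordering constraints of Problem~\ref{prob2}: your windows are defined by $x_{t}^- < x_{s}^-$ and $x_{t}^+ < x_{s}^+$, and the middle inequality $x_{s}^- < x_{t}^+$ never appears; your ``conditions (1)--(3)'' bookkeeping (two interval conditions plus the area condition) suggests a miscount rather than a deliberate omission. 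As written, the claim that $j\in[K_i,J_i]$ together with ``$(x^+_{t_j},y_{t_j})$ above $\gamma_{s_i}$'' characterizes good pairs is unjustified, since a pair with $x_{t_j}^+\le x_{s_i}^-$ can land inside your window.

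The gap happens to be benign, but you must say why. For $\delta$ small enough, the area condition $(x_t^+-x_s^-)(y_t-y_s)>r$ itself forces $x_t^+>x_s^-$ (as $y_t>y_s$), and the branch $\gamma_s(x)=M+x_s^-$ for $x<x_s^-+\delta$ guarantees that no right endpoint lies above $\gamma_{s_i}$ at such an $x$; hence no false positives arise from pairs violating the middle inequality. With that observation added, your argument is complete, and it is in fact marginally cleaner than the paper's in one respect: because you do not encode $x_s^-<x_t^+$ combinatorially, both endpoints of your clipped segments remain monotone in the pseudo-slope, so a single invocation of Lemma~\ref{lem:LE}(b) suffices. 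The paper instead enforces that constraint through the index $b(i)$, which makes the left endpoints $\max\{x^+_{t_{a(i)}},x^+_{t_{b(i)}}\}$ the maximum of an increasing and a decreasing sequence---bitonic, not monotone---and that is precisely why it needs two invocations of the lemma. Either fix is one sentence; include one of them.
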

\begin{proof}
Since $S$ and $T$ are laminar, the $x$-projected intervals in each set 
are nested.  Let $s_1,s_2,\ldots$ be the segments in $S$ with
$[x^-_{s_1},x^+_{s_1}]\subseteq [x^-_{s_2},x^+_{s_2}]\subseteq\cdots$, and
let $t_1,t_2,\ldots$ be the segments in $T$ with
$[x^-_{t_1},x^+_{t_1}]\subseteq [x^-_{t_2},x^+_{t_2}]\subseteq\cdots$.
For each $s_i$, let $a(i)$ be the smallest index with $x_{t_{a(i)}}^- <  x_{s_i}^-$, let $b(i)$ be the smallest index with $x_{s_i}^- < x_{t_{b(i)}}^+$, and let $c(i)$ be the largest index with
$x_{t_{c(i)}}^+ < x_{s_i}^+$.
Note that $a(i)$ is monotonically increasing in $i$, and
$b(i)$ is monotonically decreasing in $i$,
and $c(i)$ is monotonically increasing in $i$.
It is straightforward to compute $a(i),b(i),c(i)$ for all $i$ by a linear scan.

\newcommand{\SEG}{\overline{\gamma}}
The problem reduces to finding a pair $(s_i,t_j)$ such that
$\max\{a(i),b(i)\}\le j\le c(i)$ and the right endpoint of $t_j$
is above $\gamma_{s_i}$.
Define the curve segment $\SEG_{s_i}$ to be the part of $\gamma_{s_i}$ restricted
to $x\in [\max\{x_{t_{a(i)}}^+,x_{t_{b(i)}}^+\}, x_{t_{c(i)}}^+]$.
The problem reduces to finding a $t_j$ whose right endpoint is above
some curve segment $\SEG_{s_i}$, i.e., above the lower envelope of these curve segments.
We can compute this lower envelope 
in $O(n)$ time by Lemma~\ref{lem:LE}(b) (more precisely,
by two invocations of the lemma, as $\max\{x_{t_{a(i)}}^+,x_{t_{b(i)}}^+\}$
consists of a monotonically increasing and a monotonically decreasing part).  The problem can be then be solved by linear scan over the envelope and the
endpoints of $t_j$.
\end{proof}

\IGNORE{
\paragraph{Remarks.}
The above subroutines and their relevance to largest empty rectangle
are not exactly new.  The previous work by Aggarwal and Suri~\cite{AggSur}
exploited techniques for matrix searching, namely, finding row minima in certain types of partial Monge matrices.  The row minima problem is 
related to the computation of lower envelopes of pseudo-rays and pseudo-segments~\cite{??}.
Lemma~\ref{lem:LE}(a) can be reduced to row minima
staircase Monge matrices, and Lemma~\ref{lem:LE}(b) and Corollary~\ref{cor} can be reduced to row minima in double-staircase Monge matrices
(the preceding proof for Lemma~\ref{lem:LE}(b) is equivalent to Aggarwal and Klawe's reduction
of double-staircase to staircase matrices~\cite{AggKla}).
Klawe and Kleitman~\cite{KlaKle} gave an $O(n\alpha(n))$-time algorithm
for row minima in staircase Monge matrices, which has recently been improved to $O(n)$ by the author~\cite{??}, but these more complicated
subroutines are not necessary in our application: because we have re-directed effort to solving
the \emph{decision problem}, we only need to deal with pseudo-lines or
pseudo-segments of
\emph{constant} complexity (defined by hyperbolas), for which there are more direct algorithms for
constructing lower envelopes (as we have explained).
}

\subsection{Algorithm}

We are now ready to describe our new algorithm for solving Problem~\ref{prob2}, using interval trees and an interesting recursion with $O(\log^*n)$ depth.

\begin{theorem}
Problem~\ref{prob2} can be solved in $O(n2^{O(\log^*n)})$ time.
\end{theorem}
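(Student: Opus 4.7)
The plan is to give a recursive divide-and-conquer algorithm whose recursion tree has depth $O(\log^* n)$. Set $k := \lceil \log n \rceil$. Exploiting the pre-sorted $x$-coordinates, I partition the $x$-axis in $O(n)$ time into $\Theta(n/k)$ vertical slabs, each containing at most $k$ endpoints from $S \cup T$. A segment is \emph{short} if both of its endpoints lie in one slab and \emph{long} otherwise. For each slab $I$, the short $s$-segments (resp.\ $t$-segments) entirely inside $I$ form a laminar subset of size $\le k$; by laminarity of $S$, the long $s$-segments with $x_s^- \in I$ form a nested chain of size $\le k$ (they all contain the right boundary of $I$), and similarly the long $t$-segments with $x_t^+ \in I$ form a nested chain of size $\le k$ containing the left boundary of $I$.

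For a good pair $(s,t)$, let $I$ and $I'$ be the slabs containing the anchor endpoints $x_s^-$ and $x_t^+$; since $x_s^- < x_t^+$, we have $I \le I'$. When $I = I'$ (the ``slab-local'' case), the pair falls into one of four subcases according to the short/long classification of $s$ and $t$; in each subcase the relevant subsets of $S$ and $T$ are laminar and of size $\le k$, forming a bona fide instance of Problem~\ref{prob2} of size $O(k)$ (any four-way inequalities not automatically forced by the slab containments become the actual constraints of the sub-instance). I will solve each by recursion, yielding $O(n/k)$ subproblems of size $O(k)$. When $I < I'$ (the ``slab-crossing'' case), the right boundary $\ell$ of $I$ lies strictly between $x_s^-$ and $x_t^+$, so both $s$ and $t$ cross $\ell$; I will detect such pairs by invoking Corollary~\ref{cor} at each slab boundary $\ell$, restricted to the anchored subset $S'_\ell$ of $\le k$ segments of $S$ crossing $\ell$ whose $x_s^-$ lies in the slab immediately left of $\ell$, paired against $T_\ell := \{t \in T : t \text{ crosses } \ell\}$.

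The main obstacle will be bounding the total cost of the slab-crossing step by $O(n)$. A naive application of Corollary~\ref{cor} at every boundary sums to $\sum_\ell (|S'_\ell| + |T_\ell|)$, and the $|T_\ell|$ term can reach $\Theta(n^2/\log n)$ because a laminar family may have many segments crossing many boundaries. My plan is to combine the chain structure of $T_\ell$ with a canonical-boundary assignment for each $t \in T$ (e.g., the first slab boundary strictly to the right of $x_t^-$), so that each long $t$-segment is processed $O(1)$ times across all boundaries; Lemma~\ref{lem:LE}(b) then lets the envelope-based scan at each boundary finish in time linear in its restricted input, yielding $O(n)$ total work in this step.

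Assuming the $O(n)$ slab-crossing bound, the running time satisfies $T(n) \le c\,n + c' (n/k)\,T(k)$ for constants $c, c'$ and $k = \lceil \log n \rceil$. Writing $T(n) = n f(n)$, I obtain $f(n) \le c + c' f(\log n)$. Iterating this $O(\log^* n)$ times yields $f(n) = O(c'^{\log^* n}) = 2^{O(\log^* n)}$, and hence $T(n) = O(n \cdot 2^{O(\log^* n)})$, as claimed.
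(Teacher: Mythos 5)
Your overall skeleton (slabs of size $k=\Theta(\log n)$, recursion on $O(n/k)$ subproblems of size $O(k)$, and the $T(n)\le O(n/k)T(k)+O(n)$ recurrence giving $n2^{O(\log^*n)}$) matches the paper's, and your slab-local case is sound. But the slab-crossing case --- which you yourself identify as ``the main obstacle'' --- is exactly the heart of the theorem, and your sketch for it does not work. The boundary $\ell$ at which a crossing pair $(s,t)$ must be detected is determined by $s$ (the right wall of the slab containing $x_s^-$), not by $t$: a single long segment $t$ can form good pairs with segments $s$ whose anchors $x_s^-$ lie in $\Omega(n/\log n)$ different slabs, each forcing a different detection boundary. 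So assigning each $t$ to $O(1)$ canonical boundaries (e.g.\ the first boundary right of $x_t^-$) loses completeness, while keeping $t$ at every boundary it crosses costs $\Theta(n^2/\log n)$; the ``chain structure of $T_\ell$'' does not reconcile these, and Lemma~\ref{lem:LE}(b) only makes each invocation linear in its input, which is precisely the quantity you cannot bound.

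The paper's resolution is structurally different and rests on two ideas absent from your proposal. First, a \emph{replacement argument}: group the curves $\gamma_s$ by interval-tree node $v$ (so all $s\in S_v$ cross $v$'s dividing line, making Lemma~\ref{lem:LE}(a) applicable) and compute the lower envelope $\mathcal{E}_v$; if the right endpoint of $t^*$ lies below $\gamma_{s^*}$ for some $s^*\in S_v$, it lies below $\mathcal{E}_v$, and the envelope curve $\gamma_s$ it lies below yields \emph{another} good pair $(s,t^*)$ --- so only envelope curves ever need to be paired with $t^*$. Second, a \emph{counting argument}: the envelopes have $O(n)$ edges in total, the $O(n/b)$ slab boundaries cut them into only $O(n+(n/b)\log n)=O(n)$ pieces (using that envelopes at the same tree level have disjoint $x$-projections), so $\sum_\sigma|S_\sigma|=O(n)$ and the cross case reduces to $O(n/b)$ further recursive subproblems of size $O(b)$ rather than to direct Corollary~\ref{cor} calls. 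Without these (or equivalent) ingredients, your per-level work is not $O(n)$ and the claimed bound does not follow.
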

\begin{proof}
As a first step, we build the standard \emph{interval tree} for the given horizontal segments in $S\cup T$.
This is a perfectly balanced binary tree of with $O(\log n)$ levels,
where each node corresponds to a vertical slab.
The root slab is the entire plane, the slab at a node is the
union of the slabs of its two children, and each leaf slab contains
no endpoints in its interior.  Each segment is stored in the lowest node $v$ whose slab contains the segment (i.e., the segment is contained in $v$'s slab but
is not contained in either child's subslab).  Note that each segment is
stored only once (unlike in another standard structure called the ``segment tree'').
We can determine the slab containing each segment in $O(1)$ time by
an LCA query~\cite{BenFar} (which is easier in the case of a perfectly balanced binary tree).

\newcommand{\LE}{{\cal E}}

For each node $v$, let $S_v$ (resp.\ $T_v$)  be the set of all segments of $S$ (resp.\ $T$) 
stored in $v$. 
Define the \emph{level} of a segment to be the level of the node it is stored in.

\bigskip
\noindent{\sc Case 1:} there exists a good pair $(s^*,t^*)$ where $s^*$ and $t^*$ have the same level.
Here, $s^*$ and $t^*$ must be stored in the same node $v$ of the interval tree.  Thus, a good pair can be found as follows:

\begin{enumerate}
\item[1.]
For each node $v$, solve the problem for $S_v$ and $T_v$ by 
Corollary~\ref{cor} in $O(|S_v|+|T_v|)$ time.  Note that all segments in $S_v\cup T_v$ indeed
intersect a fixed vertical line (the dividing line at $v$).
\end{enumerate}

\noindent
The total running time of this step is $O(n)$,
since each segment is in only one $S_v$ or $T_v$.

\bigskip\noindent{\sc Case 2:} there exists a good pair $(s^*,t^*)$ where $s^*$ is on a strictly lower level than $t^*$.
To deal with this case, we perform the following steps, for some choice of parameter $b\ge \log n$:

\begin{enumerate}
\item[2a.]
For each node $v$, compute the lower envelope of the pseudo-rays
$\{\RAY_s: s\in S_v\}$ by Lemma~\ref{lem:LE}(a) in $O(|S_v|)$ time;
let $\LE_v$ denote this envelope restricted to $v$'s slab.
Note that because all segments in $S_v$ intersect a fixed vertical line
and $S_v$ is laminar, the $x_s^+$ values are
monotonically decreasing in the $x_s^-$ values for $s\in S_v$ and
so are indeed monotone in the pseudo-slopes of these pseudo-rays.

\item[2b.]
Divide the plane into a set $\Sigma$ of $n/b$ vertical slabs each
containing $b$ right endpoints of $T$.
\item[2c.]
For each slab $\sigma\in \Sigma$,
\begin{itemize}
\item
let $T_\sigma$ be the set of all segments $t\in T$ with right endpoints in $\sigma$, and 
\item let $S_\sigma$ be the set of all segments $s\in S$
such that $\RAY_s$ appears on $\LE_v\cap\sigma$ for some node~$v$.
\end{itemize}
Divide $S_\sigma$ (arbitrarily) into blocks of size $b$ and
recursively solve the problem for $T_\sigma$ and each block of~$S_\sigma$.
\end{enumerate}

\medskip\noindent
\emph{Correctness.} 
Consider a good pair $(s^*,t^*)$
 with $s^*$ on a strictly lower level than $t^*$.
Let $\sigma$ be the slab in $\Sigma$ containing the
right endpoint of $t^*$, i.e., $t^*\in T_\sigma$.
Let $v$ be the node $s^*$ is stored in.
Then $t^*$ intersects the left wall of the slab at $v$
(since $t^*$ must be stored in a proper ancestor of $v$).
Now, the right endpoint of $t^*$ is below $\RAY_{s^*}$ and
is thus below $\LE_v$.
Let $\RAY_{s}$ be the curve on $\LE_v$ that the right endpoint of $t^*$ is below, with $s\in S_v$.
Then $\RAY_s$ appears on $\LE_v\cap\sigma$, and
so $s\in S_\sigma$.
Since the right endpoint of $t^*$ is below $\RAY_s$,
we have $x_{s}^- < x_{t^*}^+ < x_{s}^+$,
and since $t^*$ intersects the left wall of $v$'s slab,
we have $x_{t^*}^- < x_{s}^-$.
So, $(s,t^*)$ is good, and the
recursive call for $T_\sigma$ and some block of $S_\sigma$ 
will find a good pair.

\medskip\noindent
\emph{Analysis.} 
The total number of edges in all envelopes $\LE_v$ is
at most $2\sum_v |S_v|\le 2n$.
Since the envelopes $\LE_v$ have disjoint $x$-projections for nodes $v$
at the same level, and since there are $O(\log n)$ levels,
the $O(n/b)$ dividing vertical lines of $\Sigma$ intersect
at most $O((n/b)\log n)$ edges among all the envelopes.
Thus, $\sum_{\sigma\in\Sigma} |S_\sigma| \le 2n + O((n/b)\log n)=O(n)$ if $b\ge\log n$,
and so the total number of recursive calls in step 2c is
$O(n/b)$.

\bigskip\noindent
{\sc Case 3:} there exists a good pair $(s^*,t^*)$ where $s^*$ is on a strictly higher level than $t^*$.
This remaining case is symmetric to Case 2 (by switching $S$ and $T$ and negating
 $y$-coordinates).

\bigskip
By running the algorithms for all three cases, a good pair is guaranteed to be found if one exists.  
The running time satisfies the recurrence
$T(n) \le O(n/b)T(b) + O(n)$.  Setting $b=\log n$ gives
$T(n)\le n2^{O(\log^*n)}$.
\end{proof}

By the observations from the beginning of this section, we can now solve Problem~\ref{prob1}
and the line-restricted problem in $O(n2^{O(\log^*n)})$ expected time,
and the original largest empty rectangle problem in 
$O(n2^{O(\log^*n)}\log n)$ expected time.

\begin{corollary}
Given $n$ points in $\R^2$ and a rectangle $B_0$,
we can compute the maximum-area empty rectangle inside $B_0$ in
$O(n2^{O(\log^*n)}\log n)$ expected time.
\end{corollary}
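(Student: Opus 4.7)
The plan is to chain together the three reductions already established in this section, which collectively convert the original two-dimensional problem into Problem~\ref{prob2}, whose running time is governed by the preceding theorem. Since every reduction is linear or adds a single logarithmic factor, the final bound will follow by direct composition, and the only quantities to track carefully are (i) how each step changes the running time, and (ii) whether the ``$T(n)/n$ is nondecreasing'' hypothesis required by two of the reductions is preserved along the way.

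First, I would dispose of the reduction from the full problem to the \emph{line-restricted} version by standard divide-and-conquer. Choose a horizontal line $\ell_0$ so that half of the points of $P$ lie above and half below, recurse on the two subsets, and then invoke a line-restricted algorithm to capture any optimal empty rectangle crossing $\ell_0$. Maintaining the $x$-sorted order across recursive calls costs nothing beyond an initial $O(n \log n)$ presort, and the usual argument shows that if the line-restricted problem can be solved in $T(n)$ time with $T(n)/n$ nondecreasing, the full problem can be solved in $O(T(n) \log n)$ time. I would note that $n 2^{O(\log^* n)}$ trivially satisfies the nondecreasing-per-$n$ condition, so this step is legal.

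Next, I would handle the two reductions that collapse the line-restricted problem into Problem~\ref{prob2}. For the line-restricted instance, compute the laminar segment sets $\{s(p) : p \in P\}$ and $\{t(q) : q \in Q\}$ in linear time via the Cartesian-tree construction described right after Figure~\ref{fig:cart}. Dismiss in $O(n)$ time the degenerate case in which the optimum touches the top or bottom of $B_0$, and dispose in $O(n)$ time of the ``three-supports-in-one-half'' nesting case by the bottom-up tree traversal cited from Chazelle--Drysdale--Lee. The remaining non-nested cases are exactly Problem~\ref{prob1} and its three symmetric siblings (obtained by swapping $S \leftrightarrow T$ and/or negating $y$). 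Then apply Chan's randomized optimization technique to convert each instance of Problem~\ref{prob1} into $O(1)$ expected calls to Problem~\ref{prob2}, inheriting the running time up to constant factors (the technique again requires the nondecreasing-per-$n$ property, which is preserved).

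Putting these pieces together: the theorem gives Problem~\ref{prob2} in $O(n 2^{O(\log^* n)})$ time; randomized optimization yields Problem~\ref{prob1}, and hence the line-restricted problem, in $O(n 2^{O(\log^* n)})$ expected time; and the initial divide-and-conquer over $\ell_0$ adds the final $\log n$ factor, producing the claimed $O(n 2^{O(\log^* n)} \log n)$ expected bound. There is no real obstacle here --- the work was done in the theorem --- but the one spot warranting a careful sentence is verifying the monotonicity condition $T(n)/n$ nondecreasing for $T(n) = n 2^{O(\log^* n)}$, since both the divide-and-conquer stitch and the optimization-to-decision reduction invoke this hypothesis explicitly.
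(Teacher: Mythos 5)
Your proposal is correct and matches the paper's own (implicit) argument exactly: the corollary follows by chaining the theorem for Problem~\ref{prob2} through the randomized optimization reduction to Problem~\ref{prob1}, the linear-time Cartesian-tree/case-analysis reduction to the line-restricted problem, and the standard divide-and-conquer over $\ell_0$, with the monotonicity of $T(n)/n$ checked along the way. No discrepancies.
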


\PAPER{

\section{Largest Empty Box in 3D}\label{sec:3d}

In this section, we describe a subcubic algorithm for the largest empty box problem
in 3D\@.  The key is the following result on an ``asymmetric'' case with a left point set and right point set of different sizes:

\newcommand{\xx}{x}

\begin{theorem}
Given a set $P$ of $n$ points in $(-\infty,0)\times \R^2$, and
a set $Q$ of $m$ points in $(0,\infty)\times \R^2$,
we can compute the maximum-volume box that
contains the origin and
is empty of points in $P\cup Q$ in $\OO(n^2 + m^{4+\eps})$ time
for an arbitrarily small constant $\eps>0$.
\end{theorem}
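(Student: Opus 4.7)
My plan is to parametrize each candidate empty box by its $yz$-projection $R$, an axis-aligned rectangle containing the origin. For a fixed $R$, the optimal $x$-extent is given by two 2D range extrema: $x^- = \max\{p_x : p \in P,\ (p_y,p_z)\in\operatorname{int}(R)\}$ and $x^+ = \min\{q_x : q \in Q,\ (q_y,q_z)\in\operatorname{int}(R)\}$, so the box volume is $(x^+-x^-)\cdot\operatorname{area}(R)$. I would preprocess $P$ for standard 2D range-max queries in $\OO(n)$ time, and preprocess $Q$ in $\OO(m^{4+\eps})$ time into a multi-level range-searching data structure that supports the optimization queries described below in $O(\operatorname{polylog} m)$ time per query.

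At the optimum, each of the four $y$- and $z$-sides of $R$ is either tight against a point of $P\cup Q$ or coincides with a side of the bounding box. A short perturbation argument -- examining which point limits each side's motion as it is pushed outward -- yields the key observation: a tight $P$-point on a side of $R$ must have $x$-coordinate strictly greater than $x^-$ (it is ``shallower'' in $x$ than the $x^-$-defining pivot $p^*\in P$), and a tight $Q$-point must have $x$-coordinate strictly less than $x^+$. In particular, the tight $P$-points on $R$'s sides form a small prefix of $P$ sorted by $p_x$ in decreasing order.

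The algorithm then runs in three phases, according to how many of $R$'s four sides are tight against points of $P$. In \emph{Phase 0} (no tight $P$-point), I enumerate all $O(m^4)$ 4-tuples of $Q$-points (with bounding-box substitutions) defining the four sides of $R$; for each, I form $R$, verify origin containment, and evaluate $x^-, x^+$ via the preprocessed range-searching structures, in $\OO(m^{4+\eps})$ time total. In \emph{Phase 1} (one tight $P$-point), for each $\pi \in P$ assigned to each of the four sides, I query the preprocessed $Q$-structure with one side fixed by $\pi$, returning the best completion of the remaining three sides and of $x^\pm$; total time $\OO(n)$. In \emph{Phase 2} (two or more tight $P$-points), for each pair $(\pi_1,\pi_2) \in P\times P$ and each of the $\binom{4}{2}$ side-pair assignments, I query the $Q$-structure with two sides of $R$ fixed by $\pi_1,\pi_2$; configurations with three or four tight $P$-points on $R$ are captured by the same pair enumeration, because any two of them suffice to parametrize the query while the remaining tight points emerge naturally from the $Q$-structure's optimization. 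This phase runs in $\OO(n^2)$ time. Combined, the running time is $\OO(n^2 + m^{4+\eps})$.

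The main obstacle will be constructing the $Q$-range-searching data structure supporting these $0$-, $1$-, and $2$-parameter-fixed optimization queries in $O(\operatorname{polylog} m)$ time after $\OO(m^{4+\eps})$ preprocessing. The plan is to use nested range trees stratified by pairs of side coordinates, storing at each internal node the optimum of $(x^+-x^-)\cdot\operatorname{area}(R)$ over the subrange of free coordinates; the standard 2D range-searching framework has to be augmented with $1$- or $2$-parameter extremum computations per node, in the style of known stabbing-type range-searching structures, so that each residual volume maximization can be answered by a single root-to-leaf traversal.
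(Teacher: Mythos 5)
Your reduction to optimizing $(x^+-x^-)\cdot\operatorname{area}(R)$ over $yz$-rectangles $R$, and the observation that each side of an optimal $R$ is tight against a projected point or $B_0$, are both sound. The gap is in Phases 1 and 2, and it is twofold. First, the objective that your $Q$-structure must optimize over the \emph{free} sides of $R$ contains the term $x^- = \max\{p_x : (p_y,p_z)\in\operatorname{int}(R)\}$, which is a range maximum over $P$ that changes as the free sides move. A data structure preprocessed on $Q$ alone cannot evaluate, or even bound, this term during its internal optimization; storing "the optimum of $(x^+-x^-)\cdot\operatorname{area}(R)$ over the subrange of free coordinates" at each node presupposes that the per-node contribution is a constant, which it is not. (Phase 0 avoids this because $R$ is fully determined before any query, so $x^\pm$ can be evaluated afterwards.) Second, the claim that configurations with three or four tight $P$-points on $R$'s sides "are captured by the same pair enumeration" is unjustified and appears false: if a third side is tight against a $P$-point, leaving it free in the $Q$-structure lets the optimization push that side past the blocking $P$-point, producing a rectangle whose true $x^-$ is larger (hence whose true volume is smaller) than what the structure reports. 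One can realize such configurations, e.g.\ with $P$-points of $x$-coordinate near $0$ forming a frame around the origin in the $yz$-projection, so all four sides of the optimal $R$ are $P$-tight; a correct enumeration along your lines would then need up to four side pivots from $P$ plus the $x^-$-defining pivot, well beyond $O(n^2)$ by naive counting. Relatedly, the assertion that the tight $P$-points "form a small prefix of $P$ sorted by $p_x$" does not follow from the perturbation argument: there can be many $P$-points with $x$-coordinate above $x^-$ lying outside $R$.

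The paper escapes both difficulties by lifting to $6$ dimensions: a box maps to a point $b^*=(x_1,x_1',x_2,x_2',x_3,x_3')$, each input point to an orthant, and emptiness becomes the condition that $b^*$ avoids the unions $U(P)$ and $U(Q)$, i.e., is dominated by a vertex $z$ of $U(P)\cap\XX$ and a vertex $a$ of $U(Q)\cap\XX$. Because $P$ lies in $x<0$ and $Q$ in $x>0$, one coordinate is irrelevant in each union, so these vertex sets have sizes $O(n^2)$ and $O(m^2)$ by the union-complexity bound for orthants in $5$D. Each vertex $z$ packages the entire "staircase corner" of $P$-constraints at once, so for a pair $(z,a)$ the optimal volume has the closed form $f(z,a)$ and the problem becomes a bichromatic optimization over $Z\times A$, solved by multi-level range searching with $O(|A|^{2+\eps})=O(m^{4+\eps})$ preprocessing and $|Z|=O(n^2)$ queries. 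If you want to rescue your plan, you would need an analogous device that bounds the number of combinatorially distinct ways $P$ can constrain the box by $O(n^2)$ and encodes each as a single query object, rather than enumerating raw $P$-points side by side.
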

\begin{proof}
Map a box $b=(-x_1,x_1')\times (-x_2,x_2')\times
(-x_3,x_3')\subset \R^3$
to a point $b^*= (x_1,x_1',x_2,x_2',x_3,x_3')$ 
in 6D\@.
Map a point $p=(p_1,p_2,p_3)\in \R^3$
to an orthant $p^* = (-p_1,\infty)\times (p_1,\infty)\times
(-p_2,\infty)\times (p_2,\infty)\times (-p_3,\infty)\times (p_3,\infty)$
in 6D\@.
Then the point $p$ is in the box $b$ iff the point $b^*$ is in the orthant~$p^*$.

\newcommand{\XX}{{\cal X}}
Say $B_0=(-c_1,c_1')\times (-c_2,c_2)\times (-c_3,c_3)$.
Define $\XX = [0,c_1]\times [0,c_1']\times [0,c_2]\times [0,c_2']\times [0,c_3]\times [0,c_3']$.  

Our goal is to find a point $b^*=(x_1,x_1',x_2,x_2',x_3,x_3')\in \XX$ maximizing the function $H(b^*):=(x_1+x_1')(x_2+x_2')(x_3+x_3')$,
such that $b^*$ is in the complement of both $U(P)=\bigcup_{p\in P} p^*$
and $U(Q)=\bigcup_{q\in Q} q^*$.  Let $Z$ be the set of all vertices of $U(P)\cap\XX$ and $A$ be the set of all vertices of $U(Q)\cap \XX$.
The constraint can then be restated as follows: $b^*$ is dominated
by some vertex in $Z$ and by some vertex in $A$.  

Since all points $(p_1,p_2,p_3)\in P$
have $p_1<0$, $U(P)\cap \XX$ corresponds to a union of $n$ orthants in 5D (the second dimension is irrelevant); by known
results on the union complexity of orthants~\cite{Boi},
the set $Z$ has $O(n^2)$ size and can be constructed in $\OO(n^2)$ time.
Similarly, since all points $(q_1,q_2,q_3)\in Q$
have $q_1>0$, $U(Q)\cap \XX$ corresponds to a union of $n$ orthants in 5D (the first dimension is irrelevant);
the set $A$ has $O(m^2)$ size and can be constructed in $\OO(m^2)$ time.

For each $z=(z_1,z_1',z_2,z_2',z_3,z_3')\in Z$
and each $a=(a_1,a_1',a_2,a_2',a_3,a_3')\in A$, the maximum of $h(b^*)$ over all $b^*\in \XX$ that 
are dominated by both $z$ and $a$ is
given by the function
\[ f(z,a) \: :=\: (\min\{z_1,a_1\} + \min\{z_1',a_1'\})\cdot
(\min\{z_2,a_2\} + \min\{z_2',a_2'\})\cdot
(\min\{z_3,a_3\} + \min\{z_3',a_3'\}).
\]
Thus, our problem is reduced to computing the maximum of $f(z,a)$
over all $z\in Z$ and $a\in A$.

This new problem can be solved by standard range searching technique.
First consider the decision problem of testing whether the maximum
exceeds a given fixed value $r$.
We build a two-level data structure for $A$ and ask a query for each $z\in Z$,
to decide whether there exists an $a\in A$ with $f(z,a)\ge r$:

For each $z\in Z$, in one case, we identify all $a\in A$ with
$z_1\ge a_1$, $z_1'\le a_1'$,
$z_2\ge a_2$, $z_2'\le a_2'$, $z_3\ge a_3$, and $z_3'\le a_3'$,
by orthogonal range searching (a range tree)~\cite{BerBOOK}.
The answer can be expressed as a union of $\OO(1)$ canonical subsets $A_i$.
For each such canonical subset $A_i$,
we decide whether there exists an $a\in A_i$ with
$(z_1 + a_1')(z_2 + a_2')(z_3 + a_3') \ge r$.
This can be done by point location in a lower envelope
of surfaces of the form $z_3 = \frac{r}{(z_1+a_1')(z_2+a_2')} - a_3'$ in 3D\@.
By known results on lower envelopes of surfaces~\cite{Sha94}, with $O(|A_i|^{2+\eps})$-time preprocessing, a query
can be answered in $O(\log m)$ time.

In another case, for example, when $z_1\ge a_1$, $z_1'\ge a_1'$,
$z_2\ge a_2$, $z_2'\le a_2'$, $z_3\ge a_3$, and $z_3'\le a_3'$,
we decide whether there exists an $a\in A_i$ with
$(z_1 + z_1')(z_2 + a_2')(z_3 + a_3') \ge r$.
With a change of variable $z_1''=z_1+z'$, this can be
done by point location in a lower envelope of surfaces of the
form $z_3 = \frac{r}{z_1''(z_2+a_2)}-a_3'$ in 3D\@.
Other cases are similar (or easier).

The entire two-level data structure
has $O(|A|^{2+\eps})$ preprocessing time and $\OO(1)$ query time.
The total time for $|Z|$ queries is $\OO(|A|^{2+\eps}+|Z|) = \OO(m^{4+O(\eps)} + n^2)$. 

The original problem can be reduced to the decision
problem, for example, by the author's randomized optimization technique~\cite{ChaSoCG98}, or deterministically, by parametric search~\cite{Meg}
(since the preprocessing algorithm can be parallelized).
\end{proof}

\begin{corollary}
Given $n$ points in $\R^3$ and a box $B_0$,
we can compute the maximum-volume empty box inside $B_0$ in $O(n^{5/2+\eps})$ time
for an arbitrarily small constant $\eps>0$.
\end{corollary}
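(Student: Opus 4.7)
The plan is to reduce the 3D largest empty box problem to repeated invocations of the theorem via an unbalanced divide-and-conquer along the $x$-axis. The theorem gives a cross-the-splitter subroutine whose time bound $\OO(n^2 + m^{4+\eps})$ is highly asymmetric in the sizes of the two point sets, and the whole game is to exploit that asymmetry by picking a very lopsided splitting plane.

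First I would pre-sort the $n$ input points by $x$-coordinate in $O(n\log n)$ time. At each recursive step, with point set $S$ of size $N$ inside a current bounding sub-box $B$, pick a vertical plane $\ell_0$ so that exactly $m = \lceil\sqrt{N}\rceil$ points of $S$ lie strictly to the right of $\ell_0$, and let the remaining $N - m$ points lie to the left. By translating so that $\ell_0$ is the plane $x=0$, set $P$ to be the left points (in $(-\infty,0)\times\R^2$) and $Q$ to be the right points (in $(0,\infty)\times\R^2$); invoke the theorem with $|P| = N - m$ and $|Q| = m = \sqrt{N}$ to compute the largest empty box inside $B$ that crosses $\ell_0$, at cost $\OO((N-m)^2 + m^{4+\eps}) = \OO(N^{2+\eps/2})$. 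Then recursively solve the two subproblems: one on the left half-sub-box $B^L$ with the $N-m$ left points, and one on the right half-sub-box $B^R$ with the $m$ right points. The optimal empty box is either contained entirely in $B^L$, entirely in $B^R$, or crosses $\ell_0$, so one of the three calls must return it.

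The recurrence is
\[
f(N) \:\le\: f(N-\sqrt{N}) \,+\, f(\sqrt{N}) \,+\, \OO(N^{2+\eps/2}).
\]
The tiny $f(\sqrt{N}) = \OO(N^{1.25+O(\eps)})$ term is absorbed, so unrolling the main branch gives $f(N) = \sum_i \OO(N_i^{2+\eps/2})$ where $N_0 = N$ and $N_{i+1} = N_i - \sqrt{N_i}$. A short Taylor estimate shows $\sqrt{N_i} \approx \sqrt{N} - i/2$, so the recursion terminates after $O(\sqrt{N})$ steps, yielding a total of $O(\sqrt{N}) \cdot \OO(N^{2+\eps/2}) = O(N^{2.5+\eps})$ time after readjusting $\eps$. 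Sorting and bookkeeping at each level are charged to the dominant $n^{2+\eps}$ term.

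The only real subtlety, and what I'd call the main point (rather than obstacle), is the choice $m = \sqrt{N}$: a balanced split would make the cross-case cost $\OO(N^{4+\eps})$ and ruin the bound, so one must use a deliberately lopsided partition tailored to the exponent $4$ in $m^{4+\eps}$. Two auxiliary details to check are (i) that the theorem's bounding region $\XX$ can be set to match $B \cap \{\text{contains the splitter point}\}$ by translation, and (ii) that by slight perturbation one may assume no point lies exactly on $\ell_0$, so that the partition $P \cup Q$ is clean. Neither affects the asymptotic bound, and combining everything yields the desired $O(n^{5/2+\eps})$ running time for the largest empty box in $\R^3$.
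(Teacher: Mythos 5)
The central step does not do what you claim: the theorem you invoke does not compute the largest empty box \emph{crossing} the plane $\ell_0$; it computes the largest empty box \emph{containing the origin}, i.e., containing one specific point of that plane. This is essential to its proof: the parametrization $b=(-x_1,x_1')\times(-x_2,x_2')\times(-x_3,x_3')$ with $(x_1,x_1',\ldots,x_3,x_3')\in[0,c_1]\times\cdots\times[0,c_3']$ forces the box to straddle the origin in \emph{all three} coordinates, which is exactly what makes $U(P)$ and $U(Q)$ restricted to that region behave like unions of orthants in five (not six) dimensions and hence have only quadratic complexity. A box that crosses your splitting plane $x=0$ can have arbitrary $y$- and $z$-extent and need not contain any prescribed point of that plane, so your three-way case analysis (``in $B^L$, in $B^R$, or crosses $\ell_0$'') misses every crossing box that avoids the chosen anchor point. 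Your auxiliary detail (i), which speaks of the box ``containing the splitter point,'' brushes against the issue but does not resolve it.

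The missing ingredient is a reduction from ``crosses a plane'' to ``contains a point.'' The paper supplies it with two preliminary levels of balanced divide-and-conquer in the other two coordinates, reducing to the \emph{line-restricted} problem in which the box must intersect a fixed axis-parallel line $\ell_0$ (costing two logarithmic factors); only then does it slice along the first coordinate into $n/m$ slabs of $m=\sqrt{n}$ points each. For a box whose right side lies in slab $\sigma_i$ but which is not contained in $\sigma_i$, convexity forces it to contain the intersection point of $\ell_0$ with the left wall of $\sigma_i$, and that point (after translation) serves as the origin required by the theorem. With this fix, your lopsided split and cost analysis essentially coincide with the paper's: the paper charges $\OO((n/m)\cdot(n^2+m^{4+\eps}))$ over the slabs and handles boxes contained in a single slab by the recursion $T(n)=\sqrt{n}\,T(\sqrt{n})+O(n^{5/2+\eps})$, which matches your $O(\sqrt{N})$ unrolled steps of cost $\OO(N^2)$ each. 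But as written, the proposal applies the subroutine to a problem it does not solve, so the argument has a genuine gap.
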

\begin{proof}
By divide-and-conquer,
it suffices to solve the \emph{plane-restricted} version where the
box is constrained to intersect a given axis-parallel plane.
By another application of divide-and-conquer,
the problem can be further reduced to the \emph{line-restricted}
version when the box is constrained to intersect a given axis-parallel line $\ell_0$.  The running time increases by at most
two logarithmic factors.
Without loss of generality, assume that $\ell_0$
is the first coordinate axis.

Divide space into $n/m$ slabs, by planes 
orthogonal to the first coordinate axis, where each slab contains
$m$ points.  Order the slabs from left to right.  Let $Q_i$ be the subset of all input points in the $i$-th slab $\sigma_i$.

Consider the case when the optimal box has its
right side in $\sigma_i$ but is not contained in $\sigma_i$.
This case reduces to an instance of the above lemma for the two point sets $Q_1\cup\cdots \cup Q_{i-1}$ and $Q_i$ (after translation to make
the left wall of $\sigma_i$ pass through the origin).
The total cost over all $n/m$ slabs $\sigma_i$ is thus 
$\OO((n/m)\cdot (n^2 + m^{4+\eps}))$.
Setting $m=\sqrt{n}$ gives a time bound of $O(n^{5/2+\eps})$.

The remaining case when the optimal box is contained in $\sigma_i$ for some $i$
can be handled by recursion.  The total time is
$T(n)=\sqrt{n}T(\sqrt{n}) + O(n^{5/2+\eps})$, yielding $T(n)=O(n^{5/2+\eps})$.
(Alternatively, instead of recursion, we could switch to some known cubic algorithm.)
\end{proof}

In higher dimensions $d\ge 4$, a similar approach can yield an algorithm with running time of the
form $O(n^{d-1+O(1/d)})$, but the approach in Section~\ref{sec:box}
is better.  On the other hand, for $d=3$, the algorithm in Section~\ref{sec:box} gives time bound
$\OO(n^{(5d+2)/6})=\OO(n^{17/6})$, which is worse than $n^{5/2}$.

}

\section{\PAPER{Largest Empty Anchored Box in Higher Dimensions\PAPER{\\}
(Warm-Up)}\LIPICS{Largest empty anchored box in higher dimensions\PAPER{\\}
(warm-up)}}\label{sec:anchored}

To prepare for our solution to the largest empty box problem in higher constant dimensions, we first investigate a simpler variant,
the \emph{largest empty anchored box} problem:
given a set $P$ of $n$ points in $\R^d$ and a fixed box $B_0$, find the largest-volume anchored box  in $B_0$ that
does not contain any points of $P$ in its interior, where an
\emph{anchored} box has the form
$B=(0,x_1)\times\cdots (0,x_d)$ (having the origin as one of its vertices).

Let $\bigcup S$ denote the union of a set $S$ of objects.
By mapping a box $B=(0,x_1)\times\cdots (0,x_d)$ to the point $(x_1,\ldots,x_d)$,
and mapping each input point $(p_1,\ldots,p_d)$ to the
orthant $(p_1,\infty)\times\cdots\times (p_d,\infty)$, the largest
empty anchored box problem reduces to:

\newcommand{\Hvol}{H_{\mbox{\scriptsize\rm vol}}}
\newcommand{\Hnew}{H_{\mbox{\scriptsize\rm new-vol}}}

\begin{problem}\label{prob:anchored}
Define the function $\Hvol(x_1,\ldots,x_d)=x_1x_2\cdots x_d$.
Given a set $S$ of $n$ orthants in $\R^d$ and a box $B_0$,
find the maximum of $\Hvol$ over $B_0 - \bigcup S$.
\end{problem}

\PAPER{
(In the application to largest empty anchored box, the orthants all contain
$(\infty,\ldots,\infty)$, but our algorithm does not require all orthants
to be of the same type.)
}

By known results~\cite{Boi}, the union of $n$ orthants in $\R^d$ has
worst-case combinatorial complexity $O(n^{\down{d/2}})$ and can be
constructed in $\OO(n^{\down{d/2}})$ time.  We will
show that Problem~\ref{prob:anchored} can be solved faster than
explicitly constructing the union.

\subsection{Preliminaries}

A key tool we need is a spatial partitioning scheme due to Overmars and Yap~\cite{OveYap} (originally developed for solving Klee's measure problem
in $\OO(n^{d/2})$ time).
The version stated below is taken from \cite[Lemma~4.6]{ChaFOCS13}; see
that paper for a short proof.  (The partitioning scheme is also related to ``orthogonal BSP trees'' \cite{DuMiSh,ChaLee}.)

\begin{lemma}\label{lem:bsp}
Given a set of $n$ axis-parallel
flats (of possibly different dimensions)
in $\R^d$, and given a parameter $r$,
we can divide $\R^d$ into $O(r^d)$ cells (bounded and unbounded boxes) so that
each cell intersects $O(n/r^{j})$ $(d-j)$-flats.

The construction of the cells, along with the conflict lists
(lists of all flats intersecting each cell),
can be done in $\OO(n+r^d+K)$ time,\footnote{
A weaker time bound was stated in \cite[Lemma~4.6]{ChaFOCS13}, but
the output-sensitive time bound follows directly from the
same construction.
}
where $K$ is the total size of the conflict lists.
\end{lemma}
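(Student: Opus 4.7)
The plan is to prove the lemma by induction on $d$, constructing the partition one coordinate axis at a time in the spirit of the Overmars--Yap partition for Klee's measure problem. The base case $d=1$ is immediate: the flats are $n$ points on a line, and $r-1$ dividers placed at the $(1/r)$-quantiles of their coordinates produce $r$ intervals each containing $O(n/r)$ points.

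For the inductive step I would peel off the $x_d$ coordinate. Collect the $x_d$-values of all input flats that fix $x_d$ (a multiset of size at most $n$) and place $r-1$ dividing hyperplanes $x_d=c_1<\cdots<c_{r-1}$ at its $(1/r)$-quantiles, producing $r$ horizontal slabs $\sigma_1,\ldots,\sigma_r$ each containing $O(n/r)$ ``interior'' flats (those fixing $x_d$ inside the slab). Within each slab I recurse on the $(d-1)$-dimensional cross-section. The flats I pass down split into two classes: a \emph{crossing} $(d-j)$-flat (not fixing $x_d$) restricts to a $(d-1-j)$-flat of the cross-section, with at most $n$ total; an \emph{interior} $(d-j)$-flat in $\sigma_k$ (fixing $x_d$) restricts to a $((d-1)-(j-1))$-flat of the cross-section, with only $O(n/r)$ per slab. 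Taking the product of each slab with its $(d-1)$-dim partition gives $O(r \cdot r^{d-1}) = O(r^d)$ cells.

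To verify the count, I analyze the two classes separately with the inductive hypothesis. Applied to the crossing class with ``$n'$''$=O(n)$, each cross-section cell meets $O(n/r^j)$ crossing $(d-j)$-flats as $(d-1-j)$-flats. Applied to the interior class with ``$n'$''$=O(n/r)$, each cross-section cell meets $O((n/r)/r^{j-1}) = O(n/r^j)$ interior $(d-j)$-flats as $((d-1)-(j-1))$-flats. Summing, each final cell meets $O(n/r^j)$ $(d-j)$-flats, matching the claim. The time bound follows by amortization: linear-time selection produces the dividers at each level, distributing a flat takes $O(1)$ per (flat, cell) incidence (charged to $K$), and the cell records themselves cost $O(r^d)$; summing over the $d$ recursion levels yields $\OO(n+r^d+K)$.

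The main obstacle I expect is reconciling the two applications of the inductive hypothesis (with effective $n'=O(n)$ for crossings and $n'=O(n/r)$ for interiors) when the recursion can only build \emph{one} partition of each cross-section. The right fix, I believe, is to strengthen the inductive statement so that the bound $O(|\mathcal{S}|/r^j)$ holds for every subset $\mathcal{S}$ of the current flats; this is preserved by letting the divider quantiles at each recursive level be taken over the union of the interior and crossing multisets (the total multiset size stays $O(n)$, so the resulting partition still has $O(r^{d-1})$ cells) while selecting dividers subset-by-subset to keep both balances. Making this ``per-subset balancing'' precise, and checking it preserves the output-sensitive time bound across all recursion levels, is where I would spend the bulk of the technical effort.
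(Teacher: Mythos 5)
The paper does not actually prove this lemma; it defers to \cite[Lemma~4.6]{ChaFOCS13}, whose construction is a single global binary partition: the axes are cut in round-robin order, each cut placed at a weighted median of the flats fixing that axis, and a weight/potential argument shows that each round of $d$ cuts multiplies the cell count by $2^d$ while a codimension-$j$ flat is duplicated into at most $2^{d-j}$ of the subcells, giving the $n/r^j$ density after $\log_2 r$ rounds. Your slab-and-recurse scheme is instead an Overmars--Yap-style induction on the dimension, which is a legitimate alternative route, but as written it has a genuine gap at exactly the point you flag. The proposed strengthening --- that the bound $O(|\mathcal{S}|/r^j)$ hold \emph{for every subset} $\mathcal{S}$ of the current flats --- is not merely delicate; it is false for any partition. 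Take $\mathcal{S}$ to be the set of codimension-$j$ flats meeting one particular cell $\Delta$ of whatever partition was built: $\Delta$ meets all $|\mathcal{S}|$ of them rather than $O(|\mathcal{S}|/r^j)$, and already a singleton $\mathcal{S}$ with $j\ge 1$ gives a contradiction since $O(1/r^j)<1$ while the flat meets some cell. No single partition can be balanced against all $2^n$ subsets, so the induction cannot be closed in the form you propose, and you explicitly leave this crux unresolved.

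The repair is a weaker, \emph{pre-declared} strengthening: the flats arrive partitioned into a constant number of groups fixed in advance, and each cell must meet $O(M_g/r^j+1)$ codimension-$j$ flats of group $g$, where $M_g$ is the size of group $g$. This is achievable by placing, at each level, $r-1$ quantile dividers separately for each group and taking their union (still $O(r)$ dividers for $O(1)$ groups, hence still $O(r^d)$ cells after rescaling constants), and it is exactly what your inductive step needs: each existing group spawns two subgroups (crossing and interior), so the number of groups doubles per level but remains at most $2^{d}=O(1)$. One must also check that the additive $+1$ terms do not accumulate harmfully, and that the time accounting survives the extra levels (incidences with intermediate slabs can be charged, with multiplicity at most $d$, to incidences with final cells). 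With these repairs your argument goes through, but as submitted the key inductive statement is asserted in an unprovable form, so the proof is incomplete.
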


Call a function $H:\R^d\rightarrow \R$ \emph{simple} if it has the form
\[ H(x_1,\ldots, x_d)\ =\  h_1(x_1)\cdots h_d(x_d),\]
where each $h_i$ is a univariate step function.
The \emph{complexity} of $H$ refers to the total complexity (number of steps) in these step functions. As an illustration of the usefulness of Lemma~\ref{lem:bsp},
we first how to maximize simple functions over the complement of a union of orthants in $\OO(n^{d/2})$ time:

\begin{lemma}\label{lem1:anchored}
Let $H$ be a simple function with $O(n)$ complexity.
Given a set $S$ of $n$ orthants in $\R^d$ and a box $B_0$,
we can compute the maximum
of $H$ in $B_0-\bigcup S$ in $\OO(n^{d/2})$ time  for any constant $d\ge 2$.
\end{lemma}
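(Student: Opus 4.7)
The plan is to apply the Overmars--Yap partitioning of Lemma~\ref{lem:bsp} with parameter $r = \sqrt{n}$, using as input flats both the $O(n)$ orthant-bounding hyperplanes of $S$ and the $O(n)$ step-discontinuity hyperplanes of the factors $h_1,\ldots,h_d$ of $H$. This produces $O(n^{d/2})$ axis-parallel cells, each intersected by $O(\sqrt{n})$ of the input hyperplanes. For each cell $c$, I would classify the orthants of $S$ into those covering $c$ (if any, skip since $c\subseteq\bigcup S$), missing $c$ (ignore), and cutting $c$ (at most $O(\sqrt{n})$ of them, call this set $S_c^{\cap}$); correspondingly $H|_c$ is a simple function of complexity $O(\sqrt{n})$, because only $O(\sqrt{n})$ step-hyperplanes traverse $c$. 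For each non-skipped cell, the sub-problem is to maximize $H|_c$ over $c\setminus\bigcup S_c^{\cap}$ --- a smaller instance of the original, with $n$ replaced by $\sqrt{n}$ in the same dimension $d$. The conflict-list output of Lemma~\ref{lem:bsp} provides, for each cell, the identities of its cutting orthants and traversing $H$-hyperplanes directly.

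The main obstacle is closing the recurrence: a naive recursive call would give $T(n)\le O(n^{d/2})\cdot T(\sqrt{n}) + \OO(n^{d/2})$, which telescopes to $\OO(n^d)$ rather than the desired $\OO(n^{d/2})$. Bridging this gap requires a sharper per-cell analysis, and I expect the resolution to combine two observations: (i)~only about $O(n^{\lfloor d/2\rfloor})$ cells are non-skipped, since a non-skipped cell must meet the boundary of $\bigcup S$, whose complexity is $O(n^{\lfloor d/2\rfloor})$ by the cited bound of~\cite{Boi}; and (ii)~within a non-skipped cell, any cutting orthant whose boundary crosses $c$ in only one coordinate acts as an axis-aligned halfspace and can be absorbed by simply shrinking $c$, so only the residual ``multi-cut'' orthants --- whose count is tightly controlled through the total incidence budget $\sum_{p^*\in S_c^{\cap}} |I(p^*)| = O(\sqrt{n})$ --- cause genuine difficulty. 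The per-cell work should then reduce to handling these few multi-cut orthants, probably via an induction on $d$ (sweeping one coordinate inside $c$ and recursing on the $(d-1)$-dimensional cross-section through the lemma applied in lower dimension). The technical crux, and the step I would need to verify carefully, is making the halfspace-vs.-multi-cut accounting quantitative enough so that the combined per-cell cost, summed against the $O(n^{\lfloor d/2 \rfloor})$ non-skipped cells, does close to $\OO(n^{d/2})$ with the base case $T_2(n) = \OO(n)$ being immediate from sweeping a 1D staircase.
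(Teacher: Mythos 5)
There is a genuine gap here, and you have correctly identified its location yourself: your recurrence $T(n)\le O(n^{d/2})\,T(\sqrt{n})+\OO(n^{d/2})$ does not close, and the two patches you sketch do not repair it. Patch (i) fails because a cell that is not covered by $\bigcup S$ need not meet the boundary of the union at all, and in any case the bottleneck is not the number of active cells but the per-cell work; patch (ii) is left entirely qualitative ("tightly controlled", "probably via an induction on $d$"), and the multi-cut orthants you worry about are precisely the hard part of the problem, so deferring them to an unspecified accounting argument is not a proof.

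The missing idea is in the choice of what to feed to Lemma~\ref{lem:bsp}. You applied the partition to the $(d-1)$-dimensional bounding hyperplanes of the orthants, which only guarantees $O(\sqrt{n})$ of them per cell and hence a genuinely recursive subproblem. The paper instead applies Lemma~\ref{lem:bsp} to the $O(n)$ \emph{$(d-2)$-flats} through the $(d-2)$-faces of the orthants. With $r=\Theta(\sqrt{n})$ the lemma guarantees $O(n/r^2)=O(1)$ such flats per cell, and the constant can be chosen so that this count is \emph{zero}. Then no orthant has two of its bounding facets crossing the same cell $\D$, i.e., every orthant is 1-sided (an axis-aligned halfspace) inside $\D$ --- exactly the "absorbable" case you describe, except that it now covers \emph{all} orthants, so there are no multi-cut orthants left and no recursion is needed. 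The union restricted to $\D$ is the complement of a box, computable in $\OO(1)$ time via orthogonal range searching, and maximizing the simple function $H=h_1(x_1)\cdots h_d(x_d)$ over a box decomposes coordinate-wise into $d$ one-dimensional range-maximum queries on the $h_i$, each answerable in $O(\log n)$ (or $O(1)$) time after global preprocessing --- so there is also no need to put the step-function breakpoint hyperplanes into the partition. The total is $O(r^d)=O(n^{d/2})$ cells times $\OO(1)$ work each.
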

\begin{proof}
Apply Lemma~\ref{lem:bsp} to the $O(n)$ $(d-2)$-flats that pass through the $(d-2)$-faces of the given orthants.  This yields a partition of $B_0$
into cells.  

Consider a cell $\D$.  The number of $(d-2)$-flats intersecting $\D$
is bounded by $O(n/r^2)$, which can be made 0 by setting
$r:=\Theta(\sqrt{n})$.  Consequently, only $(d-1)$-faces of the given 
orthants may intersect $\D$, i.e., all orthants are 1-sided inside~$\D$.
The union of 1-sided orthants simplifies to the complement of a box
(we can use orthogonal range searching or intersection data structure to identify the 1-sided orthants intersecting $\D$ and compute this box
in $\OO(1)$ time~\cite{AgaEriSURV,BerBOOK}). 
For a simple function $H(x_1,\ldots,x_d)=h_1(x_1)\cdots h_d(x_d)$, 
we can maximize $H$ over a box by maximizing $h_i(x_i)$ over an interval
for each $i\in\{1,\ldots,d\}$ separately.
This corresponds to a 1D range maximum query for each $i$,
which can be done straightforwardly in $O(\log n)$ time (or more
carefully in $O(1)$ time~\cite{BenFar}).
As the number of cells is $O(r^d)=O(n^{d/2})$,
the total running time is $\OO(n^{d/2})$.
\end{proof}

\subsection{Algorithm}

To improve over $n^{d/2}$, we adapt an approach by Bringmann~\cite{Bri}
(originally for solving Klee's measure problem for orthants in $O(n^{d/3+O(1)})$ time).
The approach involves first solving the 2-sided special case, and then applying Overmars and Yap's partitioning scheme.  A \emph{2-sided
orthant} is the set of all points $(x_1,\ldots,x_d)\in\R^d$ satisfying
a condition of the form $[x_i\ ?\ a]\wedge [x_j\ ?\ b]$ for some $i,j\in\{1,\ldots,d\}$, where each occurrence of ``?''\ is either $\le$ or $\ge$.  We will adapt the author's subsequent re-interpretation~\cite[Section~4.1]{ChaFOCS13} of Bringman's 
technique, described in terms of monotone step functions.  


\begin{figure}
\begin{center}
\includegraphics[scale=0.6]{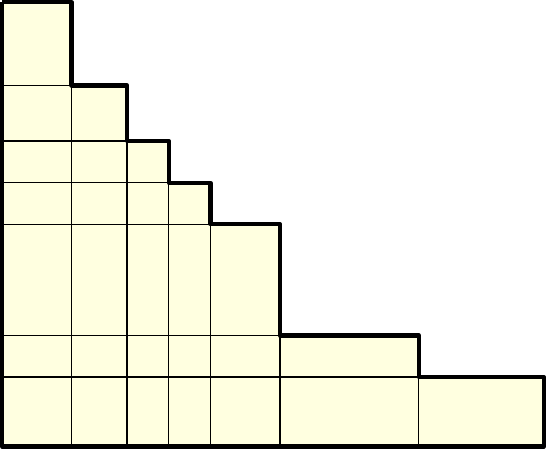}
\end{center}
\caption{The union of one type of 2-sided orthants.}\label{fig:staircase}
\end{figure}

\begin{theorem}\label{thm0:anchored}
In the case when all the input orthants are 2-sided,
Problem~\ref{prob:anchored} can be solved in $\OO(n^{\down{d/2}/2})$ time
for 
any constant $d\ge 4$.
\end{theorem}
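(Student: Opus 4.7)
The plan is to adapt the Bringmann/Chan technique for Klee's measure problem on 2-sided orthants, rephrased in terms of monotone step functions as in \cite[Section~4.1]{ChaFOCS13}, from volume computation to maximization of $\Hvol$. The structural observation driving the approach is that, for each (unordered) pair $\{i,j\}$ of coordinates and each of the four possible quadrant orientations, the 2-sided orthants with active coordinates $\{i,j\}$ and matching orientation project to a monotone 2D staircase $K_{ij}$ in the $(x_i,x_j)$-plane. Since $d$ is constant there are only $O(1)$ such pair--orientation groups, of total complexity $O(n)$. Hence the constraint $b^*\in B_0\setminus\bigcup S$ decomposes into $O(1)$ conditions $(x_i,x_j)\notin K_{ij}$, each of the form $x_j\le g_{ij}(x_i)$ (or symmetric) for a monotone step function $g_{ij}$.

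Given this reformulation, the algorithmic plan has three stages. First, construct all staircases in $O(n)$ time by sorting and a linear sweep. Second, apply Lemma~\ref{lem:bsp} with a parameter $r$ to the $O(n)$ $(d-2)$-flats obtained by lifting each staircase corner $(a,b)$ of $K_{ij}$ to the flat $\{x_i=a,\, x_j=b\}$ in $\R^d$. This gives $O(r^d)$ cells, each meeting only $O(n/r^2)$ corner-flats. Third, within each cell $\D$, any orthant whose corner-flat lies outside $\D$ restricts, inside $\D$, to a 1-sided constraint in a single coordinate; such constraints can be absorbed into the objective by multiplying $\Hvol$ by indicator step functions in that coordinate, producing a simple function of $O(n)$ complexity in the sense of Lemma~\ref{lem1:anchored}. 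What remains are the $O(n/r^2)$ orthants with corner inside $\D$.

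To match the target bound $\OO(n^{\lfloor d/2\rfloor/2})$, I would pair up the $d$ coordinates into $\lfloor d/2\rfloor$ pairs, mirroring Bringmann's dimension-halving idea. Each 2-sided orthant whose active coordinate pair coincides with a predefined pair reduces, in the $\lfloor d/2\rfloor$-dimensional ``pair space,'' to an effectively 1-sided constraint, so invoking Lemma~\ref{lem1:anchored} in this reduced dimension yields $\OO(m^{\lfloor d/2\rfloor/2})$ for a subproblem of size $m$. Cross-pair orthants---those whose active coordinates straddle two different predefined pairs---are handled by an $O(1)$-way case split on which coordinate in each pair is active. Balancing $r$ so that $O(r^d)$ cells multiplied by the $\OO((n/r^2)^{\lfloor d/2\rfloor/2})$ per-cell cost equal $\OO(n^{\lfloor d/2\rfloor/2})$ closes the recursion.

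The main obstacle will be transplanting Bringmann's original argument---which leverages additive inclusion--exclusion manipulations specific to volume integration---into the maximization setting, where subtraction is unavailable. Concretely, I expect the technical heart to be verifying that the per-cell subproblem still fits the simple-function maximization framework of Lemma~\ref{lem1:anchored}: the multiplication of $\Hvol$ by step-function indicators for the 1-sided portion must not blow up the overall step-function complexity beyond $O(n)$, and the $O(1)$ cross-pair orthant cases must be absorbed without logarithmic overhead. Once this is established, the proof is a structured composition of the staircase reformulation, Lemma~\ref{lem:bsp}, and Lemma~\ref{lem1:anchored}.
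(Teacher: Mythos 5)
There is a genuine gap. Your starting point is right---the complement of a union of 2-sided orthants is a conjunction of $O(d^2)$ staircase constraints $[x_i \;?\; g(x_j)]$ with monotone step functions $g$---but the algorithm you build on top of it does not work, for two reasons. First, the balancing at the end is arithmetically impossible: the total cost $O(r^d)\cdot\OO((n/r^2)^{\down{d/2}/2}) = \OO(r^{\up{d/2}}\, n^{\down{d/2}/2})$ strictly exceeds the target $\OO(n^{\down{d/2}/2})$ for every $r>1$, since the exponent of $r$ in the cell count ($d$) is larger than the exponent of $r$ removed from the per-cell cost ($\down{d/2}$). There is no choice of $r$ that closes this. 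The paper does \emph{not} apply Lemma~\ref{lem:bsp} at the top level of Theorem~\ref{thm0:anchored} at all; the spatial partition only appears later, in Corollary~\ref{thm:anchored}, to reduce general orthants to the 2-sided case, and inside Lemma~\ref{lem1:anchored} in the already-reduced dimension.

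Second, and more fundamentally, the mechanism that actually halves the dimension is missing. Your ``pair space'' sketch asserts that a 2-sided orthant whose active coordinates form a predefined pair becomes ``effectively 1-sided,'' but Lemma~\ref{lem1:anchored} requires the objective to be a product of \emph{univariate} step functions, and $\Hvol$ does not factor through any scalar function of a coordinate pair; nor is it explained how cross-pair orthants are absorbed by an $O(1)$-way case split. The paper's proof instead performs a purely symbolic variable elimination: using rewriting rules for monotone step functions (taking envelopes, inverses, compositions, and a disjunctive split), each ``free'' variable $x_i$ is reduced to exactly one predicate $[x_i\le f(x_j)]$ and one $[x_i\ge g(x_k)]$; these are replaced by $[f(x_j)\ge g(x_k)]$ and $x_i$ is substituted by $f(x_j)$ in $H$, so that $h_j$ becomes $h_j(x_j)\,\sigma(f(x_j))$---still a univariate step function of complexity $O(n)$. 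Each elimination consumes two indices, so at least $\up{d/2}$ variables are eliminated before free indices run out, leaving $O(1)$ subproblems in dimension $d'\le\down{d/2}$ to which Lemma~\ref{lem1:anchored} is applied, giving $\OO(n^{\down{d/2}/2})$. This substitution step, which keeps the objective simple while shrinking the dimension, is precisely the replacement for Bringmann's subtraction-based argument that you correctly flag as the obstacle but do not resolve.
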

\begin{proof}
The boundary of the union of 2-sided orthants of the form $[x_i\ ?\ a]\wedge [x_j\ ?\ b]$
with a fixed $i,j$ and a fixed choice for the two ``?''s
is a staircase, i.e., the graph
of a univariate monotone (increasing or decreasing) step function. 
(See Figure~\ref{fig:staircase}.)
Thus, the complement of the union of 2-sided orthants
can be expressed as the set of all points $(x_1,\ldots,x_d)\in\R^d$ satisfying an expression $E(x_1,\ldots,x_d)$ which is a conjunction of $O(d^2)$ predicates each of the form
$[x_i\ ?\ f(x_j)]$, where  $i,j\in\{1,\ldots,d\}$, ``?''\ is $\le$ or $\ge$, 
 and $f$ is a monotone step function.  The total
complexity of these step functions is $O(n)$.
Conversely, any such expression can be mapped back to the complement
of a union of $O(n)$ 2-sided orthants.

We first observe a few simple rules for rewriting expressions:

\begin{enumerate}
\item
$[x_i \le f(x_j)]\wedge [x_i \le g(x_j)]$ can be rewritten
as $[x_i\le \min\{f,g\}(x_j)]$ if $f$ and $g$ are both increasing or
both decreasing.  Note that the lower envelope $\min\{f,g\}$ 
is still a monotone step function with $O(n)$ complexity. 
A similar rule applies for $\ge$. 
\item
$[x_i\le f(x_j)]$ can be rewritten as $[x_j\ge f^{-1}(x_i)]$
if $f$ is increasing (the inequality is flipped if $f$ is decreasing).
Note that the inverse $f^{-1}$ is still a monotone step function.
\item
More generally,
$[f(x_i)\le g(x_j)]$ can be rewritten as $[x_j\ge (f^{-1}\circ g)(x_i)]$
if $f$ is increasing (the inequality is flipped if $f$ is decreasing).  Note
that the composition $f^{-1}\circ g$ is still a monotone step
function with $O(n)$ complexity.
\item
$[x_i\le f(x_j)]\wedge [x_i\le g(x_k)]$ can be rewritten as 
the disjunction of 
$[x_i\le f(x_j)] \wedge [f(x_j)\le g(x_k)]$
and $[x_i\le g(x_k)] \wedge [g(x_k)\le f(x_j)]$.
A similar rule applies for $\ge$.
\end{enumerate}

The plan is to decrease the dimension by repeatedly eliminating variables: 

We maintain a simple function $H$.
Initially, $H(x_1,\ldots,x_d)=\sigma(x_1)\cdots\sigma(x_d)$,
where $\sigma(x)$ denotes the successor of $x$ among the $O(n)$ input
coordinate values ($\sigma$ is a step function).
We call an index $i$ \emph{free} if the variable $x_i$ appears
exactly once in $H$ and is ``unaltered'', i.e., $h_i(x_i)=\sigma(x_i)$.  All indices are initially free.

In each iteration, we pick a free index $i$.
Whenever $x_i$ appears more than twice in $E$,
we can apply rule~4 (in combination with rules 1--3) 
to obtain a disjunction of 2 subexpressions,
where in each subexpression, the number of occurrences of $x_i$ is decreased.
By repeating this process $O(1)$ times (recall that $d$ is a 
constant), we obtain a disjunction of $O(1)$
subexpressions, where in each subexpression, only at most two occurrences of $x_i$ remain---in at most one predicate of the form
$[x_i\le f(x_j)]$, and at most one predicate of the form $[x_i\ge g(x_k)]$.

We branch off to maximize $H$ over each of these subexpressions separately.
In such a subexpression, to eliminate the variable $x_i$ while maximizing $H$,
we replace the two predicates $[x_i\le f(x_j)]$ and $[x_i\ge g(x_k)]$ with $[f(x_j)\ge g(x_k)]$,
and replace $x_i$ with $f(x_j)$ in $H$ (i.e., reset $h_j(x_j)$
to $h_j(x_j)\sigma(f(x_j))$, which is still a step function with $O(n)$ complexity).  Now, $i$ and $j$ are not free.

We stop a branch when there are no free indices left.
At the end, we get a large but $O(1)$ number of subproblems,
where in each subproblem,
at least $\up{d/2}$ variables have been eliminated, i.e., the 
dimension is decreased to $d'\le\down{d/2}$.
We solve each subproblem by Lemma~\ref{lem1:anchored}
in $\OO(n^{d'/2})$ time.
\end{proof}

We now combine Theorem~\ref{thm0:anchored} and Lemma~\ref{lem:bsp}
to solve Problem~\ref{prob:anchored}:

\begin{corollary}\label{thm:anchored}
Problem~\ref{prob:anchored} can be solved in $\OO(n^{d/3+\down{d/2}/6})$ time for any constant $d\ge 4$.
\end{corollary}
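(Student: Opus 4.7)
The strategy follows Bringmann's template for Klee's measure problem: use the Overmars--Yap partitioning of Lemma~\ref{lem:bsp} so that inside each cell the problem collapses to an instance of Theorem~\ref{thm0:anchored}. Concretely, I would begin by collecting, for each input orthant, the $\binom{d}{3}=O(1)$ axis-parallel $(d-3)$-flats spanned by its $(d-3)$-faces, giving a set of $O(n)$ such flats. Applying Lemma~\ref{lem:bsp} to these with $r=\Theta(n^{1/3})$ chosen large enough that $n/r^3<1$ partitions $B_0$ into $O(r^d)=O(n^{d/3})$ cells, each avoiding every $(d-3)$-flat. In any such cell $\D$, no input orthant can have three or more of its axis-parallel bounding hyperplanes simultaneously crossing $\D$, since three such hyperplanes (on three distinct axes) would force their intersection $(d-3)$-flat to meet $\D$. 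Hence each input orthant restricts inside $\D$ to either a 1-sided or a 2-sided orthant.

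Next, inside each cell $\D$, I would split the restricted orthants by sidedness. Orthants that entirely contain $\D$ render $\D$ infeasible and cause the cell to be skipped; orthants disjoint from $\D$ are ignored. As in Lemma~\ref{lem1:anchored}, the union of the 1-sided restrictions is the complement of an axis-aligned box $B_\D\subseteq\D$, computable in $\OO(1)$ time per cell via $2d$ preprocessed 1D range-max queries. The 2-sided restrictions number at most $O(n/r^2)=O(n^{1/3})$, bounded by the number of $(d-2)$-flats (spanned by $(d-2)$-faces of the input orthants) that cross $\D$; they can be listed by $\binom{d}{2}=O(1)$ 2D orthogonal range reporting queries per cell. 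What remains in $\D$ is precisely an instance of Problem~\ref{prob:anchored} with box $B_\D$ and $m=O(n^{1/3})$ 2-sided orthants, solvable by Theorem~\ref{thm0:anchored} in $\OO(m^{\down{d/2}/2})=\OO(n^{\down{d/2}/6})$ time. Taking the best answer across all cells gives total cost $\OO(n^{d/3}\cdot n^{\down{d/2}/6})=\OO(n^{d/3+\down{d/2}/6})$.

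The main thing to verify is that the preprocessing does not blow up the bound. Supplying Lemma~\ref{lem:bsp} with only the $(d-3)$-flats keeps the conflict-list cost $\OO(n+r^d+K)$ with $K=O(n\,r^{d-3})=O(n^{d/3})$, within budget; and the global output of the 2-sided listings across cells is $\sum_\D O(n/r^2)=O(n\,r^{d-2})=O(n^{(d+1)/3})$, which is at most $n^{d/3+\down{d/2}/6}$ for every $d\ge 4$ since $\down{d/2}\ge 2$ implies $\down{d/2}/6\ge 1/3$. The subtle point, rather than any deep difficulty, is simply choosing to feed \emph{only} the $(d-3)$-flats (not the $(d-1)$- or $(d-2)$-flats) to Lemma~\ref{lem:bsp} and relying on auxiliary range-searching data structures to recover the 1- and 2-sided restrictions per cell at the right cost; with this balance, the partitioning and Theorem~\ref{thm0:anchored} combine to yield the claimed bound.
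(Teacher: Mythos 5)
Your overall architecture (Overmars--Yap cells of side parameter $r=\Theta(n^{1/3})$, reduce each cell to a 2-sided instance, invoke Theorem~\ref{thm0:anchored} per cell) matches the paper, but your deliberate decision to feed \emph{only} the $(d-3)$-flats to Lemma~\ref{lem:bsp} creates a genuine gap. The guarantee ``each cell intersects $O(n/r^j)$ $(d-j)$-flats'' in Lemma~\ref{lem:bsp} applies only to the flats actually supplied to the construction; it says nothing about other flats. So after partitioning with respect to the $(d-3)$-flats alone, you are not entitled to the bound ``$O(n/r^2)$ $(d-2)$-flats cross $\D$,'' which is exactly the bound you use to cap the number of 2-sided restrictions at $O(n^{1/3})$. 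Indeed, a cell $\D=\prod_k(l_k,u_k)$ can be crossed by $\Theta(n)$ flats of the form $\{x_1=a_1,\ x_2=a_2\}$ (take all $a_1\in(l_1,u_1)$, $a_2\in(l_2,u_2)$) while every associated $(d-3)$-flat $\{x_1=a_1,x_2=a_2,x_k=a_k\}$ misses $\D$ because $a_k\notin(l_k,u_k)$ for all $k\ge 3$. In that cell you would face $\Theta(n)$ 2-sided orthants, and Theorem~\ref{thm0:anchored} would cost $\OO(n^{\down{d/2}/2})$ there rather than $\OO(n^{\down{d/2}/6})$, destroying the claimed total.

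The fix is simply to do what the paper does: apply Lemma~\ref{lem:bsp} to both the $(d-3)$-flats and the $(d-2)$-flats. Then each cell meets zero $(d-3)$-flats and $O(n/r^2)=O(n^{1/3})$ $(d-2)$-flats by the lemma's guarantee, and the conflict lists hand you the 2-sided orthants per cell for free. Your stated reason for excluding the $(d-2)$-flats---fear that the conflict-list cost $K$ blows the budget---is unfounded by your own arithmetic: with both families supplied, $K=O(r^d\cdot n/r^2)=O(n\,r^{d-2})=O(n^{(d+1)/3})$, which you already verified is at most $n^{d/3+\down{d/2}/6}$ for all $d\ge 4$. Everything else in your write-up (the 1-sided union collapsing to the complement of a box, skipping cells swallowed by an orthant, the final product of $O(n^{d/3})$ cells times $\OO(n^{\down{d/2}/6})$ per cell) is sound.
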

\begin{proof}
Apply Lemma~\ref{lem:bsp} to the $O(n)$ $(d-3)$-flats and $(d-2)$-flats through the $(d-3)$-faces and $(d-2)$-faces of
the given orthants.  This yields a partition of $B_0$ into cells.

Consider a cell $\D$.  The number of $(d-3)$-flats intersecting $\D$
is $O(n/r^3)$, which can be made 0 by setting
$r:=\Theta(n^{1/3})$.  The number of $(d-2)$-flats intersecting $\D$
is $O(n/r^2)=O(n^{1/3})$.  So, inside the cell $\D$, all orthants
are 2-sided or 1-sided, with $O(n^{1/3})$ 2-sided
orthants.  The union of 1-sided orthants simplifies to the complement
of a box (we can use orthogonal range searching or intersection data structures~\cite{AgaEriSURV,BerBOOK} to identify the 1-sided orthants intersecting $\D$ and compute this box).  We can thus apply Theorem~\ref{thm0:anchored} to maximize
$H$ over the cell $\D$ in $\OO((n^{1/3})^{\down{d/2}/2})$ time.
As there are $O(r^d)=O(n^{d/3})$ cells, the total running time is
$\OO(n^{d/3}\cdot (n^{1/3})^{\down{d/2}/2})$.
\end{proof}

\begin{corollary}
Given $n$ points in $\R^d$ and a box $B_0$,
we can compute the maximum-volume empty anchored box inside $B_0$ in $\OO(n^{d/3+\down{d/2}/6})\le \OO(n^{5d/12})$ time for any constant $d\ge 4$.
\end{corollary}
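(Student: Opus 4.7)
The plan is to read this corollary off directly from Corollary~\ref{thm:anchored} via the reduction already sketched just before Problem~\ref{prob:anchored}. First I would translate $B_0$ so that its ``lower'' corner sits at the origin, which is legitimate because the anchored boxes we search for must share a vertex with the origin anyway. Then I would map each candidate anchored box $B=(0,x_1)\times\cdots\times(0,x_d)\subset B_0$ to the single point $(x_1,\ldots,x_d)\in B_0$, and map each input point $p=(p_1,\ldots,p_d)\in P$ to the orthant $p^\ast=(p_1,\infty)\times\cdots\times(p_d,\infty)$.

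Under this correspondence, $B$ contains $p$ in its interior iff the mapped point $(x_1,\ldots,x_d)$ lies in $p^\ast$. Hence $B$ is empty of $P$ iff $(x_1,\ldots,x_d)\in B_0-\bigcup_{p\in P}p^\ast$, while the volume of $B$ is exactly $\Hvol(x_1,\ldots,x_d)=x_1x_2\cdots x_d$. So finding a largest-volume empty anchored box is identical to maximizing $\Hvol$ over $B_0-\bigcup S$ for $S=\{p^\ast:p\in P\}$, which is precisely Problem~\ref{prob:anchored} with $|S|=n$.

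Applying Corollary~\ref{thm:anchored} to this instance then yields an $\OO(n^{d/3+\down{d/2}/6})$-time algorithm. To finish, I would verify the simplified bound: using $\down{d/2}\le d/2$, the exponent satisfies
\[
\frac{d}{3}+\frac{\down{d/2}}{6}\ \le\ \frac{d}{3}+\frac{d}{12}\ =\ \frac{4d+d}{12}\ =\ \frac{5d}{12},
\]
so the running time is also bounded by $\OO(n^{5d/12})$ for every constant $d\ge 4$.

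I do not expect any serious obstacle here, since all the work is concentrated in Corollary~\ref{thm:anchored} (and, behind it, Theorem~\ref{thm0:anchored} and Lemma~\ref{lem:bsp}). The only thing to be careful about is that the reduction produces orthants of a single ``type'' (all opening toward $+\infty$ in every coordinate), which is a special case of what Problem~\ref{prob:anchored} allows, so Corollary~\ref{thm:anchored} applies without modification.
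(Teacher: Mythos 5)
Your proposal is correct and matches the paper's (implicit) argument exactly: the corollary is just the reduction stated at the top of Section~\ref{sec:anchored} (boxes $\mapsto$ points, input points $\mapsto$ orthants, volume $\mapsto$ $\Hvol$) combined with Corollary~\ref{thm:anchored}, plus the arithmetic $\down{d/2}/6\le d/12$. No gaps.
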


\section{\PAPER{Largest Empty Box in Higher Dimensions}%
\LIPICS{Largest empty box in higher dimensions}}\label{sec:box}

We now adapt the approach from Section~\ref{sec:anchored} to solve
the original largest empty box problem in higher constant dimensions.
By $d$ levels of divide-and-conquer, it suffices to solve
the \emph{point-restricted} version of the problem: given a set $P$
of $n$ points in $\R^d$, a fixed box $B_0$, and a fixed point $o$, find the largest-volume box $B\subset B_0$ that contains $o$ and is empty
of points of $P$.  An $O(T(n))$-time algorithm for the point-restricted problem immediately yields an $O(T(n)\log^dn)$-time algorithm for
the original problem (in fact, the polylogarithmic factor disappears if $T(n)/n^{1+\delta}$ is increasing for some constant $\delta>0$).
Without loss of generality, assume that $o$ is the origin.

By mapping a box $B=(-x_1,x_1')\times\cdots\times (-x_d,x_d')$ 
(which has volume $(x_1+x_1')\cdots (x_d+x_d')$) to 
the point $(x_1,x_1',\ldots,x_d,x_d')$ in $2d$ dimensions, 
and mapping each input point $p=(p_1,\ldots,p_d)$ to
the orthant $(-p_1,\infty)\times (p_1,\infty)\times\cdots
(-p_d,\infty)\times (p_d,\infty)$ in $2d$ dimensions (and changing $B_0$ appropriately),
the problem
reduces to the following variant of Problem~\ref{prob:anchored}, after doubling the dimension:

\begin{problem}\label{prob:box}
Define the function $\Hnew(x_1,\ldots,x_d)=(x_1+x_2)(x_3+x_4)\cdots (x_{d-1}+x_d)$ for an even $d$.
Given a set $S$ of $n$ orthants in $\R^d$ and a box $B_0$,
find the maximum of $\Hnew$ over $B_0 - \bigcup S$.
\end{problem}
 
The above objective function $\Hnew$ is a bit more complicated than
the one from Section~\ref{sec:anchored}, and so further ideas are needed\ldots

\subsection{Preliminaries}

For a multigraph $G$ with vertex set $\{1,\ldots,d\}$ (without self-loops),
define a \emph{$G$-function} $H:\R^d\rightarrow\R$ to be a function of the form
\[ H(x_1,\ldots, x_d)\ =\  \prod_{i=1}^d h_i(x_i)\cdot \prod_{e=ij\in G} (h'_e(x_i) + h''_e(x_j)),
\]
where $h_i$, $h'_e$, and $h''_e$ are univariate step functions.
The \emph{complexity} of $H$ refers to the total complexity of these step functions.

A \emph{pseudo-forest} is a graph where
each component is either a tree, or a tree plus an edge---in the latter
case, the component is called a \emph{1-tree} (and we allow the extra edge to be a duplicate of an edge in the tree).

\begin{lemma}\label{lem0}
Let $H$ be a $G$-function $H$ with $O(n)$ complexity.
Given a box $B_0$, we can compute the maximum of $H$ over $B_0$ in $\OO(n)$ time if $G$
is a forest, or $\OO(n^2)$ time if $G$ is a pseudo-forest, for any constant $d$.
\end{lemma}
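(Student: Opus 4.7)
The plan is a dynamic-programming style variable elimination guided by $G$. The essential observation is that each $x_i$ appears only in $h_i(x_i)$ and in the edge factors of edges incident to $i$, so eliminating a leaf of $G$ couples back to just one other variable.

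\emph{Forest case.} Pick any leaf $i$ with incident edge $e = ij$. Then
\[
\max_{x\in B_0} H(x)\;=\;\max_{(x_k)_{k\neq i}}\, H'(\cdot)\cdot M(x_j),\qquad
M(x_j)\,:=\,\max_{x_i\in I_i}\, h_i(x_i)\bigl(h'_e(x_i)+h''_e(x_j)\bigr),
\]
where $H'$ is the $(G-i)$-function with $h_i$ and the $e$-factor deleted and $I_i$ is the $i$-th interval of $B_0$. Writing $\alpha(x_i):=h_i(x_i)h'_e(x_i)$ and $\beta(x_i):=h_i(x_i)$, both step functions sharing $O(n)$ breakpoints, the bracket equals the affine function $L_k(c):=A_k+B_k c$ in $c:=h''_e(x_j)$ on the $k$-th common piece (with values $\alpha\equiv A_k$, $\beta\equiv B_k$). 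The upper envelope $U(c):=\max_k L_k(c)$ is piecewise linear with $O(n)$ breakpoints and can be built in $O(n\log n)$ time. Since $h''_e(\cdot)$ is a step function with $O(n)$ pieces, $M(x_j)=U(h''_e(x_j))$ is itself a step function with $O(n)$ pieces, obtained in the same time. I then set $h_j\leftarrow h_j\cdot M$, which remains a step function of $O(n)$ complexity, and recurse on the forest $G-i$. After $O(d)=O(1)$ leaf peels each tree component shrinks to a single surviving vertex, whose univariate step function is maximized in $O(n)$ time. Total: $\OO(n)$.

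\emph{Pseudo-forest case.} Tree components are handled as above. For a 1-tree component, pick any edge $e=ij$ on its unique cycle. For each of the $O(n)$ pieces of $h'_e(x_i)$, on the corresponding subinterval the value $h'_e(x_i)\equiv A$ is constant, so the edge factor collapses to the univariate $A+h''_e(x_j)$, which I absorb into $h_j$. Restricting $I_i$ to the piece and deleting $e$ turns the component into a tree, solvable in $\OO(n)$ by the forest procedure. Iterating over the $O(n)$ pieces and taking the max across the $O(1)$ 1-tree components yields $\OO(n^2)$ overall.

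\emph{Main subtlety.} The crucial point is that the inner max $M(x_j)$ over a \emph{continuous} $x_i\in I_i$ is nonetheless a \emph{step} function of $x_j$ with only $O(n)$ pieces; this holds because $U(c)$, though piecewise linear, is evaluated at the step function $h''_e(x_j)$, so $M$ inherits the step-function property with at most as many pieces as $h''_e$. The additive form $h'_e(x_i)+h''_e(x_j)$ of edge factors is exactly what makes the inner max a maximum of affine functions in $h''_e(x_j)$, amenable to upper-envelope computation. Without this structure the elimination would not preserve the $G$-function class, and the recursion would blow up — which is also the reason the same approach stops working for graphs beyond pseudo-forests.
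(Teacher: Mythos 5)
Your proposal is correct and follows essentially the same route as the paper: eliminate a leaf by computing the upper envelope of $O(n)$ affine functions in $\xi = h''_e(x_j)$ (your $U$ is the paper's $F$), compose with the step function $h''_e$, and absorb into $h_j$; then break each cycle by branching over $O(n)$ cases to reduce to the forest case. The only (immaterial) difference is that for a 1-tree you branch on the $O(n)$ pieces of $h'_e$ along a cycle edge, whereas the paper branches on the $O(n)$ breakpoint values of a cycle vertex $x_i$.
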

\begin{proof}
For the forest case:
Pick a leaf $i$.
Then $H$ is of the form $h(x_i)\cdot (h'(x_i)+h''(x_j))\ \cdots$,
where $h,h',h''$ are step functions 
and $x_i$ does not appear in ``$\cdots$''.
Define $F(\xi) := \max_{x\in\R } h(x)\cdot (h'(x)+\xi)$.
Then $F$ is the upper envelope of $O(n)$ linear functions in the single
variable $\xi$, and can be constructed in $\OO(n)$ time by the dual of
a planar convex hull algorithm~\cite{BerBOOK}.
We can eliminate the variable $x_i$ by replacing
the $h(x_i)\cdot (h'(x_i)+h''(x_j))$ factor with $F(h''(x_j))$
(which is a step function in $x_j$ with $O(n)$ complexity).
As a result, $H$ becomes a $(G-\{i\})$-function in $d-1$ variables.
After $d$ iterations, the problem becomes trivial.

For the pseudo-forest case:
We may assume the graph is connected, since we can
maximize the parts of $H$ corresponding to different components separately.
Pick a vertex $i$ that belongs to the unique cycle of $G$ (if exists).
Then $G-\{i\}$ is a forest.
By trying out all $O(n)$ different settings of $x_i$ (breakpoints
of the step functions), 
the problem reduces to $O(n)$ instances of the forest case.
\end{proof}

\begin{lemma}\label{lem1}
Let $H$ be a 
$G$-function with $O(n)$ complexity, where $G$ is a pseudo-forest.
Given a set $S$ of $n$ boxes in $\R^d$ and a box $B_0$,
we can compute the maximum
of $H$ over $B_0-\bigcup S$ in $\OO(n^{d/2+1})$ time for any constant $d$. 
\end{lemma}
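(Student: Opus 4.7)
The plan is to combine the pseudo-forest algorithm of Lemma~\ref{lem0} with the Overmars--Yap partitioning of Lemma~\ref{lem:bsp}, mirroring how Corollary~\ref{thm:anchored} combined Lemma~\ref{lem:bsp} with Theorem~\ref{thm0:anchored}. The idea is to cut $B_0$ into $O(n^{d/2})$ cells in which each box simplifies to an axis-aligned slab and $H$ has only $O(\sqrt{n})$ local complexity, so that a cell-local variant of Lemma~\ref{lem0} takes only $\widetilde{O}(n)$ per cell.

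Concretely, I would apply Lemma~\ref{lem:bsp} with parameter $r=\Theta(\sqrt{n})$ to the set of $O(n)$ flats consisting of (a)~the $(d-2)$-flats through the $(d-2)$-faces of the boxes in $S$, and (b)~the axis-aligned hyperplanes $\{x_i=c\}$ for every breakpoint $c$ of every step function $h_i$, $h'_e$, $h''_e$ appearing in $H$. Each of the $O(r^d)=O(n^{d/2})$ resulting cells meets $O(n/r^2)=0$ flats of type~(a) and $O(n/r)=O(\sqrt{n})$ flats of type~(b). The crucial combinatorial consequence is that, inside any cell $\D$, for each box $B\in S$, no two non-parallel facets of $B$ can be simultaneously active in $\D$ (otherwise their $(d-2)$-face flat would cross $\D$); hence the active facets of $B$ are all parallel, and $B\cap \D$ is either a slab in a single axis direction, all of $\D$, or empty. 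The ``$\D\subseteq B$'' case is detected in $\widetilde{O}(1)$ per cell by standard orthogonal range searching after $\widetilde{O}(n)$ preprocessing, and $\D$ is discarded. Otherwise $\D$ contains $O(\sqrt{n})$ active slabs, and by the inclusion of the flats in~(b) the local complexity of $H$ restricted to $\D$ is $O(\sqrt{n})$.

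Since every active box is a slab, the good region factorizes across coordinates: $\D\setminus\bigcup S=\prod_{i=1}^d K_i^\D$, where $K_i^\D$ is the union of $O(\sqrt{n})$ intervals obtained by removing the direction-$i$ slab intervals from $\D$'s $x_i$-range. I would then extend Lemma~\ref{lem0} to maximize a pseudo-forest $G$-function over a product of interval-unions, in $\widetilde{O}(N^2)$ time when the local complexity is $N$: the leaf-elimination step builds the upper envelope of linear functions over $x_i\in K_i^\D$ instead of $\R$ (still $\widetilde{O}(N)$ time via an envelope per interval followed by a merge); and the cycle-vertex step tries the $O(N)$ candidate values of $x_i$, which are the step-function breakpoints lying in $K_i^\D$ together with the interval endpoints of $K_i^\D$, each reducing to the forest case. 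With $N=O(\sqrt{n})$ per cell, this costs $\widetilde{O}(n)$ per cell, and $\widetilde{O}(n^{d/2}\cdot n)=\widetilde{O}(n^{d/2+1})$ in total; the Lemma~\ref{lem:bsp} partition itself costs $\widetilde{O}(n+r^d+K)=\widetilde{O}(n^{(d+1)/2})$, which is absorbed.

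The main obstacle is the slab claim inside each cell---i.e., the combinatorial argument that the absence of $(d-2)$-face flats of a box $B$ in $\D$ forces all active facets of $B$ to share a single axis direction---together with the corresponding adaptation of Lemma~\ref{lem0}'s envelope-and-cycle-enumeration machinery so that it operates over $\prod_i K_i^\D$ with the sharper local complexity bound $O(\sqrt{n})$ rather than $O(n)$. Both are essentially routine once framed correctly, but without the inclusion of the $H$-breakpoint hyperplanes in~(b) the per-cell cost would be the much larger $\widetilde{O}(n^2)$, and the target bound would fail.
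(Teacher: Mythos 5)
Your overall route is the same as the paper's: apply Lemma~\ref{lem:bsp} with $r=\Theta(\sqrt{n})$ to the $(d-2)$-flats of the boxes together with the breakpoint hyperplanes of $H$, argue that inside each cell every box degenerates, reduce to a per-cell instance of Lemma~\ref{lem0} with local complexity $O(\sqrt{n})$ costing $\OO((\sqrt{n})^2)$, and multiply by the $O(n^{d/2})$ cells. Your treatment of the geometry inside a cell is in one respect more careful than the paper's: you correctly observe that a box whose two \emph{parallel} facets both cross the cell becomes a slab rather than a halfspace, so the feasible region is a product $\prod_i K_i^\D$ of interval-unions rather than the complement of a single box, and your adaptation of Lemma~\ref{lem0}'s envelope and cycle-enumeration steps to such a product is sound.

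There is, however, one genuine gap: the assertion that each cell ``contains $O(\sqrt{n})$ active slabs.'' Nothing in your construction guarantees this. You fed Lemma~\ref{lem:bsp} only the $(d-2)$-flats of the boxes and the breakpoint hyperplanes of $H$; the boxes' own facet hyperplanes are not among the input flats, so a single cell can be crossed by the facets of up to $n$ boxes --- for instance, $n$ parallel thin slabs have no $(d-2)$-faces at all, so your type-(a) flats impose no constraint on them. In that situation some $K_i^\D$ consists of $\Theta(n)$ intervals, the cycle-vertex enumeration in your extended Lemma~\ref{lem0} has $\Theta(n)$ candidates, and the per-cell cost rises to $\OO(n^2)$ rather than $\OO(n)$, breaking the claimed $\OO(n^{d/2+1})$ total. (One cannot rescue this with range searching either: unlike the union of halfspaces in Lemma~\ref{lem1:anchored}, which collapses to a single box determined by coordinatewise extremes, a union of $\Theta(n)$ slabs cannot even be represented in $\OO(1)$ space per cell.) The fix is a one-line addition: also include the $O(n)$ $(d-1)$-flats through the facets of the boxes in $S$ among the flats given to Lemma~\ref{lem:bsp}. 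This keeps the input at $O(n)$ flats, so each cell meets $O(n/r)=O(\sqrt{n})$ of them, the conflict lists identify the active slabs, and the partition still costs $\OO(n^{(d+1)/2})$. With that repair your argument goes through.
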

\begin{proof}
Apply Lemma~\ref{lem:bsp} to the $O(n)$ $(d-2)$-flats through the
boundaries of the orthants, together with the $O(n)$ $(d-1)$-flats $x_j=a$ for
all breakpoints $a$ of the step functions appearing in $H$.
This yields a partition of $B_0$ into cells.

Consider a cell $\D$.
The number of $(d-2)$-flats intersecting $\D$ is $O(n/r^2)$,
which can be made 0 by setting $r:=\Theta(\sqrt{n})$.
So, inside the cell $\D$,
we see only 1-sided orthants, and their union simplifies to
the complement of a box.
In addition, the number of $(d-1)$-flats intersecting $\D$ is $O(n/r)=O(\sqrt{n})$; in other words, the
breakpoints of the step functions in $H$ relevant to the cell $\D$ is $O(\sqrt{n})$.
We can thus apply Lemma~\ref{lem0} to maximize $H$ over the cell $\D$
in $\OO((\sqrt{n})^2)$ time.
As the number of cells is $O(r^d)=O(n^{d/2})$,
the total running time is $\OO(n^{d/2}\cdot (\sqrt{n})^2)$. 
\end{proof}

\subsection{Algorithm}

We now modify the proof of Theorem~\ref{thm0:anchored} to
solve Problem~\ref{prob:box} for the 2-sided orthant case:

\begin{figure}
\begin{center}
\includegraphics[scale=0.6]{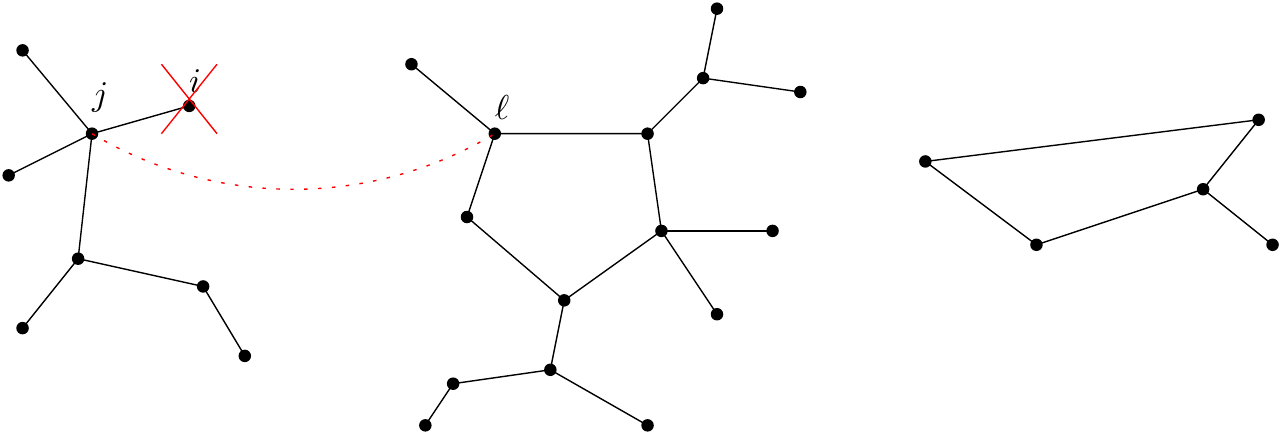}
\end{center}
\caption{After removing vertex $i$ and adding edge
$j\ell$, the graph $G$ remains a pseudo-forest.}\label{fig:graph}
\end{figure}

\begin{theorem}\label{thm0:box}
In the case when all input orthants are 2-sided,
Problem~\ref{prob:box} can be solved in $\OO(n^{d/4+1})$ time for any constant even $d$.
\end{theorem}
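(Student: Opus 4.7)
The plan is to extend the variable-elimination scheme of Theorem~\ref{thm0:anchored} while carefully tracking how the ``interaction graph'' of $H$ evolves. Viewing $\Hnew$ as a $G$-function, I would start with $G$ being the perfect matching $M=\{(1,2),(3,4),\ldots,(d-1,d)\}$ (a pseudo-forest), and aim, after eliminating $d/2$ variables, to end up with a lower-dimensional subproblem whose associated graph is still a pseudo-forest, so that Lemma~\ref{lem1} can be invoked.

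First I would express the complement of $\bigcup S$ as a conjunction $E$ of $O(d^2)$ predicates $[x_i \ ?\ f(x_j)]$ with monotone step functions of total complexity $O(n)$, exactly as in Theorem~\ref{thm0:anchored}, and then run the same elimination loop: pick a free index $i$, apply rewriting rules 1--4 to reduce the occurrences of $x_i$ in $E$ to at most two (branching into $O(1)$ disjunctive subexpressions along the way), and in each branch eliminate $x_i$ by substituting $x_i = f(x_j)$. The new bookkeeping is on $G$: when $x_i$ is replaced by $f(x_j)$, each factor $(h'_e(x_i)+h''_e(x_\ell))$ coming from an edge $e=i\ell$ becomes $(\tilde h_e(x_j)+h''_e(x_\ell))$, so vertex $i$ is deleted from $G$ and each edge incident to $i$ is redirected to $j$ (self-loops at $j$ being absorbed into $h_j$). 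Each freshly-eliminated vertex is a leaf of the current $G$ (starting from the matching $M$, every vertex is a leaf, and I would preserve this property by the order in which indices are eliminated), so it contributes at most one edge to redirect.

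The graph-theoretic heart of the proof is then the claim, suggested by Figure~\ref{fig:graph}, that by choosing the elimination order judiciously and using rules 2--3 to swap the roles of variables in $[x_i\le f(x_j)]$ and $[x_i\ge g(x_k)]$ so as to steer the substitution target $j$, the move ``delete vertex $i$ and add edge $j\ell$'' can always be performed safely, meaning the new edge does not introduce a second cycle in its component. After the $d/2$ eliminations (each consuming two free indices, so all are consumed), the residual subproblem lives in $d'\le d/2$ dimensions with a $G'$-function over a pseudo-forest, which Lemma~\ref{lem1} handles in $\OO(n^{d'/2+1})\le \OO(n^{d/4+1})$ time per branch; summed over the $O(1)$ branches spawned by rule~4, the total running time is $\OO(n^{d/4+1})$. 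The main obstacle is the graph-theoretic step: the substitution target $j$ is partially dictated by the current form of $E$ rather than freely chosen, so I must verify that the rewriting-rule flexibility, combined with a leaf-first elimination order, is always enough to keep $G$ a pseudo-forest and never create a component with two independent cycles.
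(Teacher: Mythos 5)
Your skeleton matches the paper's: view $\Hnew$ as a $G$-function with $G$ the perfect matching, eliminate variables via the rewriting rules while redirecting edges, and finish with Lemma~\ref{lem1} on a pseudo-forest in dimension $d'\le d/2$. But the step you flag as ``the main obstacle'' is precisely the step you have not supplied, and the mechanism you propose for it---steering the substitution target $j$ so that the new edge $j\ell$ avoids creating a second cycle---is not how the difficulty is resolved and cannot work in general: $j$ is determined by which predicates of $E$ bound $x_i$, and although rule~4 lets you branch on which of several bounds is tight, you must handle \emph{all} branches of the disjunction, so you cannot arrange for a favorable $j$ in every one. The correct observation is that no steering is needed. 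If you always pick $i$ to be a \emph{free leaf of a component $T$ that is a tree}, then the redirected edge $j\ell$ has its endpoint $\ell$ inside the tree $T-i$; hence it either turns $T-i$ into a 1-tree (if $j\in T$) or merges $T-i$ with one other pseudo-forest component, creating at most one cycle (if $j\notin T$). The pseudo-forest property is therefore preserved automatically, and since each step replaces one edge by one edge, the edge count never exceeds its initial value $d/2$.

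What your write-up is then missing is the invariant that makes termination and the dimension count go through: every component that is still a tree retains at least \emph{two} free leaves. This guarantees that you can keep picking a free leaf in a tree component until no tree components remain, at which point every surviving component is a 1-tree, so the number of surviving vertices equals the number of edges, giving $d'\le d/2$. Your alternative count---``after $d/2$ eliminations, each consuming two free indices, all are consumed''---does not hold: the substitution target $j$ may already be non-free (so an elimination can consume only one fresh index), and free indices sitting inside 1-tree components are never eliminated at all, so the number of eliminations need not be $d/2$. Without the two-free-leaves invariant (or some substitute), you have no guarantee that the process terminates with at most $d/2$ surviving variables rather than stalling earlier; the final bound $\OO(n^{d'/2+1})\le\OO(n^{d/4+1})$ rests entirely on that count.
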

\begin{proof}
We maintain a $G$-function $H$.
Initially, $H(x_1,\ldots,x_d)=(\sigma(x_1)+\sigma(x_2))(\sigma(x_3)+\sigma(x_4))\cdots(\sigma(x_{d-1})+\sigma(x_d))$, with $G$ being a matching with $d/2$ edges, where $\sigma(x)$ denotes the successor of $x$ among all $O(n)$ input coordinate values.  We call an index $i$ \emph{free} if $x_i$ appears 
exactly once in $H$
and is ``unaltered'' (i.e., $H$ is of the form $(\sigma(x_i) + h(x_\ell)) \cdots$
where $x_i$ does not appear in ``$\cdots$'').
All indices are initially free.
We maintain the following invariants: at any time,
(i)~$G$ is a pseudo-forest with at most $d/2$ edges,
and (ii)~for each component $T$ of $G$ which is a tree (not a 1-tree), $T$
has at least two free leaves.

In each iteration, we pick a free leaf $i$
in some component $T$ of $G$ which is a tree.
As before, we rewrite the expression $E$ as a disjunction of $O(1)$
subexpressions, where in each subexpression, only two occurrences of $x_i$ remain---in a predicate of the form
$[x_i\le f(x_j)]$, and another predicate of the form $[x_i\ge g(x_k)]$.

We branch off to maximize $H$ for each of these subexpressions separately.
In such a subexpression, to eliminate the variable $x_i$ while maximizing $H$,
we replace the two predicates $[x_i\le f(x_j)]$ and $[x_i\ge g(x_k)]$ with $[f(x_j)\ge g(x_k)]$,
and replace $x_i$ with $f(x_j)$ in $H$ (since $x_i$ is free).  
Now, $i$ and $j$ are not free.
Also, in the graph $G$, the unique edge $i\ell$ incident to $i$ 
is replaced by
$j\ell$ (unless $j=\ell$).  If $j$ is in the same component $T$ as $i$,
then $T$ becomes a 1-tree; otherwise, two components are
merged and the new component is either a tree with at least two free
leaves, or a 1-tree.  (See Figure~\ref{fig:graph}.)
So, the invariants are maintained.

We stop a branch when there are no free indices left.
At the end, we get $O(1)$ subproblems, where in each subproblem,
all components are 1-trees, and so the number of nodes
is exactly equal to the number of edges, implying that
the dimension is $d'\le d/2$.
Now we can apply Lemma~\ref{lem1} to solve these subproblems in $\OO(n^{d'/2+1})$ time.
\end{proof}

\begin{corollary}
Problem~\ref{prob:box}
can be solved in $\OO(n^{(5d+4)/12})$ time for any constant even~$d$.
\end{corollary}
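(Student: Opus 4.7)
The plan is to mirror the proof of Corollary~\ref{thm:anchored}, substituting the new Theorem~\ref{thm0:box} for Theorem~\ref{thm0:anchored} as the 2-sided subroutine. The overall recipe is the Overmars--Yap partitioning scheme (Lemma~\ref{lem:bsp}): decompose $B_0$ into cells in which every input orthant restricts to a 1- or 2-sided orthant, and then invoke Theorem~\ref{thm0:box} cell by cell.

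Concretely, I would feed Lemma~\ref{lem:bsp} the $O(n)$ axis-parallel $(d-3)$- and $(d-2)$-flats through the $(d-3)$- and $(d-2)$-faces of the given orthants, with parameter $r := \Theta(n^{1/3})$. Each resulting cell $\D$ meets $O(n/r^3)=O(1)$ of the $(d-3)$-flats, which can be forced down to $0$ so that every orthant clipped to $\D$ is at most 2-sided; and it meets $O(n/r^2)=O(n^{1/3})$ of the $(d-2)$-flats, so at most $O(n^{1/3})$ of the clipped orthants are genuinely 2-sided. The remaining 1-sided orthants union inside $\D$ to the complement of a single box, which I identify in $\OO(1)$ time via standard orthogonal range searching, exactly as in Lemma~\ref{lem1:anchored} and Corollary~\ref{thm:anchored}. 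Intersecting $\D$ with that box yields the effective bounding region for the cell.

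Inside each cell, $\Hnew$ remains a $G$-function whose graph $G$ is the initial perfect matching $\{(2i-1,2i) : 1 \le i \le d/2\}$, since clipping the coordinates to $\D$ merely truncates the step functions without altering $G$. Thus Theorem~\ref{thm0:box} applies directly with a local input size of $O(n^{1/3})$, taking $\OO((n^{1/3})^{d/4+1})$ time per cell. Summing over the $O(r^d)=O(n^{d/3})$ cells gives
\[
\OO\left( n^{d/3} \cdot n^{(d/4+1)/3} \right) \,=\, \OO\left( n^{d/3 + d/12 + 1/3} \right) \,=\, \OO\left( n^{(5d+4)/12} \right),
\]
which is the claimed bound. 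The partition's construction cost $\OO(n + r^d + K)$, with $K = O(r^d \cdot n^{1/3}) = O(n^{d/3 + 1/3})$, is dominated by this.

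The only obstacle, such as it is, is the bookkeeping check that $\Hnew$ remains a valid $G$-function on each cell so that Theorem~\ref{thm0:box} can be invoked verbatim; but since clipping intersects the step functions' domains with intervals and never changes the multigraph $G$ or the pairing of coordinates, this verification is routine, and the corollary follows by an argument entirely parallel to Corollary~\ref{thm:anchored}.
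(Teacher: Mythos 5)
Your proposal is correct and is essentially the paper's own proof: the paper's argument is exactly "follow Corollary~\ref{thm:anchored} with Theorem~\ref{thm0:box} in place of Theorem~\ref{thm0:anchored}," yielding $\OO\bigl(n^{d/3}\cdot (n^{1/3})^{d/4+1}\bigr)=\OO(n^{(5d+4)/12})$. You have simply spelled out the details of the Overmars--Yap partition and the per-cell bookkeeping that the paper leaves implicit by reference.
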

\begin{proof}
Following the proof of Corollary~\ref{thm:anchored} but using
Theorem~\ref{thm0:box} instead of Theorem~\ref{thm0:anchored}
gives running time $\OO(n^{d/3} \cdot (n^{1/3})^{d/4+1})$.
\end{proof}

Applying the above corollary in $2d$ dimensions, we finally obtain:

\begin{corollary}
Given $n$ points in $\R^d$ and a box $B_0$,
we can compute the maximum-volume empty box inside $B_0$
in $\OO(n^{(5d+2)/6})$ time for any constant $d$.
\end{corollary}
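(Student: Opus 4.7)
The plan is to simply chain together the reductions and bounds already developed in this section. The heavy lifting has all been done; what remains is to assemble the pieces and verify the exponent arithmetic.

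First, I would invoke the $d$-level divide-and-conquer reduction mentioned at the start of Section~\ref{sec:box} to pass from the original largest empty box problem to the point-restricted version, where we seek the largest-volume empty box containing a fixed point $o$ (which we place at the origin). This costs at most an $O(\log^d n)$ factor, which is absorbed into the $\OO$ notation (and in fact disappears entirely, since the bound $T(n)/n^{1+\delta}$ will be increasing for any sufficiently small constant $\delta>0$).

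Next, I would apply the dimension-doubling transformation already described: each candidate box $B=(-x_1,x_1')\times\cdots\times(-x_d,x_d')$ maps to a point $(x_1,x_1',\ldots,x_d,x_d')\in\R^{2d}$, and each input point maps to an orthant in $\R^{2d}$, so that the box is empty iff the image point lies outside the union of these orthants. The volume of the box, expressed in the new coordinates, is exactly $(x_1+x_2')(x_3+x_4')\cdots(x_{2d-1}+x_{2d}')$ after relabeling, which is the function $\Hnew$ on $2d$ coordinates. Thus the point-restricted largest empty box problem in $\R^d$ reduces to an instance of Problem~\ref{prob:box} in dimension $D=2d$, which is even.

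Finally, I would apply the preceding corollary to this instance in dimension $D=2d$, which yields running time
\[
\OO(n^{(5D+4)/12}) \;=\; \OO(n^{(10d+4)/12}) \;=\; \OO(n^{(5d+2)/6}),
\]
as claimed. No step here is a genuine obstacle; the only thing to be careful about is to make sure we apply the previous corollary in the \emph{doubled} dimension $2d$ (not in $d$), and that $2d$ is always even so the parity hypothesis of that corollary is satisfied.
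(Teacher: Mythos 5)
Your proposal is correct and matches the paper's argument exactly: the paper likewise reduces to the point-restricted version, doubles the dimension to get an instance of Problem~\ref{prob:box} with $\Hnew$ in $D=2d$ (even) dimensions, and applies the preceding corollary to obtain $\OO(n^{(10d+4)/12})=\OO(n^{(5d+2)/6})$. The exponent arithmetic and the absorption of the polylogarithmic overhead are both as in the paper.
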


\section{Remarks}\label{sec:rmks}

\paragraph{On the 2D algorithm.} 
The $2^{O(\log^*n)}$ factor can be analyzed more precisely (an upper
bound of $3^{\log^*n}$ can be shown with minor changes to the algorithm).  
A question remains whether the extra factor could be further lowered
to inverse-Ackermann, or eliminated completely.

The previous algorithm by Aggarwal and Suri~\cite{AggSur}
used matrix searching techniques, namely, for finding row minima in certain types of partial Monge matrices.  
We are able to bypass such subroutines 
because we have focused our effort on solving the \emph{decision problem}
(due to the author's randomized optimization technique~\cite{ChaSoCG98}).
Generally, the row minima problem is equivalent 
to the computation of lower envelopes of pseudo-rays and pseudo-segments, not necessarily of constant complexity~\cite{ChaSODA21}.
However, to solve the decision problem, we only need
lower envelopes of pseudo-rays and pseudo-segments of constant complexity
(formed by hyperbolas), for which there are simpler direct methods, as
we have noted in Lemma~\ref{lem:LE}.  (Incidentally, the proof we gave for
reducing Lemma~\ref{lem:LE}(b) to (a) is essentially equivalent to Aggarwal and Klawe's reduction
of row minima in double-staircase to staircase matrices~\cite{AggKla};
a similar idea has also been used in dynamic data structures with ``FIFO updates''~\cite{ChHePr}.)

On the other hand, it should be possible to modify our approach to get improved \emph{deterministic} algorithms for 2D largest empty rectangle, by
solving the optimization problem directly and using known matrix searching subroutines~\cite{KlaKle}, though details are more involved
and the running time seems slightly worse than in our randomized algorithm.

It is theoretically possible to devise an optimal algorithm
for Problem~\ref{prob1} without knowing the true complexity of the algorithm, since by a constant number of rounds of recursion in our
method, the problem is reduced to subproblems of very small size (say, $\log\log\log\log n$),
for which we can afford to explicitly build an optimal decision tree
(this type of trick appeared before in the literature~\cite{Lar,PetRam}).



\paragraph{On the higher-dimensional algorithms.}
Our approach in higher dimensions works for maximizing the perimeter (sum of edge lengths) of
the box as well.  In fact, the algorithm for the simpler, largest empty \emph{anchored} box problem should suffice here after doubling the dimension, since the required objective function here is $H_{\mbox{\scriptsize\rm perim}}(x_1,\ldots,x_d)=x_1+\cdots+x_d$, which is ``similar'' to $\Hvol(x_1,\ldots,x_d)=x_1\cdots x_d$.

For the largest empty anchored box problem, the $\OO(n^{5d/12})$ time bound can be further improved to
$\OO(n^{(7d+6)/18})$, by building on the graph-theoretic ideas from
Section~\ref{sec:box}, as we show in 
\PAPER{Appendix~\ref{sec:anchored:new}}\LIPICS{the full paper}.  Still further improvements of the exponent is likely possible,
by working with $G$-functions for \emph{hypergraphs} $G$, not just graphs, though improvement on the fraction $7/18$ appears very tiny and requires $d$ to be a very large constant, and the algorithm becomes more complicated.
For the largest empty box problem, we currently don't know how to improve
the fraction $5/6$, even using hypergraphs.
It remains a fascinating question what the best fraction $\beta$
is for which the problem could be solved in $O(n^{\beta d +o(d)})$
time.  

On the conditional lower bound side, another relevant question is whether Problem~\ref{prob:anchored}
or \ref{prob:box}
remain $W[1]$-hard with respect to the parameter $d$ in the special case of 2-sided orthants.

\paragraph{Acknowledgement.}
I thank David Zheng for discussions on the 2D problem.

{
\small
\bibliographystyle{plainurl}
\bibliography{opt_box_arxiv}
}

\PAPER{

\appendix

\section{Largest Empty Anchored Box in Higher Dimensions\PAPER{\\} (Further Improved)}\label{sec:anchored:new}

In this section, we return to the largest empty \emph{anchored} box problem and describe a further improvement to the result in Section~\ref{sec:anchored}, by incorporating the graph-theoretic approach
from Section~\ref{sec:box}.




For a multigraph $G$ with vertex set $\{1,\ldots,d\}$ (without self-loops), define 
a \emph{generalized $G$-function} $H:\R^d\rightarrow\R$ to be a function of the form
\[ H(x_1,\ldots, x_d)\ =\  \prod_{i=1}^d h_i(x_i)\cdot \prod_{e=ij\in G} h'_e(x_i,x_j),
\]
where each $h_i$ is a univariate step function and each $h'_e$ is a bivariate step function.
Here, in a bivariate step function $h'_e$, the domain is divided into grid cells
by horizontal and vertical lines, and $h'_e$ is constant in each grid cell; the complexity of $h'_e$ refers to the number of horizontal and
vertical lines.
The \emph{complexity} of $H$ is the total
complexity of the univariate and bivariate step functions.

\begin{lemma}\label{lem0:new}
Let $H$ be a generalized $G$-function with $O(n)$ complexity, where $G$ is a matching.  Given a box $B_0$, we can compute the maximum of $H$ over $B_0$ in $O(n^2)$ time for any constant $d$.
\end{lemma}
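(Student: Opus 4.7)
The plan is to exploit the fact that a matching $G$ decomposes the variables into disjoint groups: each isolated vertex $i$ forms a singleton group, and each edge $e=ij$ forms a pair. Writing $H$ as the product over these groups, we obtain factors
\[
F_i(x_i) \:=\: h_i(x_i) \quad\text{for isolated $i$,}\qquad
F_e(x_i,x_j) \:=\: h_i(x_i)\,h_j(x_j)\,h'_e(x_i,x_j) \quad\text{for $e=ij$,}
\]
and the variable sets of distinct factors are disjoint. Since $B_0$ is an axis-parallel box, it also factors into coordinate intervals, so maximizing $H$ over $B_0$ reduces (up to bookkeeping on signs, which in our application are non-negative) to maximizing each factor independently on its corresponding axis interval or rectangle and multiplying.

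Each singleton factor $F_i$ is a univariate step function with $O(n)$ pieces, so its maximum on an interval is trivially computed in $O(n)$ time. For an edge $e=ij$, the factor $F_e$ is a bivariate step function: the univariate $h_i,h_j$ contribute horizontal and vertical grid lines, and the bivariate $h'_e$ contributes additional horizontal and vertical grid lines. If the total complexity used by this factor is $c_e$, then $F_e$ is constant on each of the $O(c_e^2)$ cells of the resulting grid. We compute these cell values (e.g.\ by a sweep over the grid) and return the maximum, all in $O(c_e^2)$ time.

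Summing over the $O(d)=O(1)$ components, the total running time is $O\bigl(\sum_e c_e^2 + \sum_i n\bigr) = O(n^2)$, using $\sum c_e = O(n)$ together with $d=O(1)$. Multiplying the per-factor maxima yields the overall maximum of $H$ over $B_0$.

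The only subtlety is that the naive ``product of maxima'' argument assumes the factors are non-negative, which holds in the intended application to largest empty anchored boxes (all step-function values arise from coordinate differences and are non-negative). If full generality were needed one would instead maximize each factor separately over the sign-classes of the remaining factors, but this only multiplies the running time by $2^{O(d)}=O(1)$ and does not affect the $O(n^2)$ bound. The main ``work'' step is thus the $O(n^2)$ grid scan for each matched pair, which is unavoidable since a bivariate step function of complexity $n$ already has $\Theta(n^2)$ constant-value cells in the worst case.
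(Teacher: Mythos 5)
Your proof is correct and fills in exactly the argument the paper dismisses as ``Trivial'': a matching decomposes $H$ into factors with disjoint variable sets, each maximized independently over its interval or grid rectangle, with the bivariate step functions accounting for the $O(n^2)$ grid scans. Your aside on non-negativity (or, in general, taking both the max and min of each factor and trying all $2^{O(d)}$ combinations) is the right way to handle the one genuine subtlety, and it does not affect the bound.
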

\begin{proof}
Trivial.
\end{proof}

\begin{lemma}\label{lem1:new}
Let $H$ be a generalized $G$-function with $O(n)$ complexity, where $G$ is a matching.
Given a set $S$ of $n$ boxes in $\R^d$ and a box $B_0$, 
we can compute the maximum
of $H$ in $B_0-\bigcup S$ in $O(n^{d/2+1})$ time for any constant $d$.
\end{lemma}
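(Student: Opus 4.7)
The plan is to mimic the proof of Lemma~\ref{lem1} almost verbatim, applying the Overmars--Yap partitioning scheme (Lemma~\ref{lem:bsp}) to a suitable collection of axis-parallel flats and then invoking Lemma~\ref{lem0:new} on each cell. The only wrinkle relative to Lemma~\ref{lem1} is that the objective function now contains bivariate step functions $h'_e(x_i,x_j)$ rather than separable sums $h'_e(x_i)+h''_e(x_j)$, so we must be careful that the flats on which the bivariate step functions jump are still all $(d-1)$-flats of the form $x_k = \text{const}$.

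First I would collect the flats to feed into Lemma~\ref{lem:bsp}. There are $O(n)$ $(d-2)$-flats through the $(d-2)$-faces of the input boxes, together with $O(n)$ $(d-1)$-flats of the form $x_k = a$ for every breakpoint $a$ of every univariate $h_i$ and of the horizontal/vertical grid of every bivariate $h'_e$ (crucially, the bivariate step functions are piecewise-constant on axis-parallel grid cells, so their discontinuity loci are axis-parallel hyperplanes). Then I would set $r := \Theta(\sqrt{n})$, so that the resulting partition has $O(r^d) = O(n^{d/2})$ cells, each of which is intersected by $O(n/r^2) = 0$ of the $(d-2)$-flats and by $O(n/r) = O(\sqrt{n})$ of the $(d-1)$-flats.

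Next I would analyze a single cell $\D$. Since no $(d-2)$-flat pierces $\D$, every input box appears 1-sided inside $\D$, so $(\bigcup S) \cap \D$ is the complement of a single box within $\D$, which can be identified in $\OO(1)$ time via a standard orthogonal range searching structure. Since only $O(\sqrt{n})$ $(d-1)$-flats cross $\D$, only $O(\sqrt{n})$ breakpoints of any of the $h_i$ or $h'_e$ are relevant inside $\D$; restricted to $\D$, the function $H$ is therefore a generalized $G$-function of complexity $O(\sqrt{n})$, with $G$ still a matching. By Lemma~\ref{lem0:new} applied to this subproblem over a box (the intersection of $\D$ with the complement-of-a-box from the previous sentence), the maximum over $\D$ can be computed in $O((\sqrt{n})^2) = O(n)$ time.

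Summing over the $O(n^{d/2})$ cells gives the claimed $O(n^{d/2+1})$ bound. The only step I expect to require any care is verifying that the discontinuities of the bivariate step functions $h'_e$ are indeed captured by axis-parallel $(d-1)$-flats (so that Lemma~\ref{lem:bsp} is applicable and its $O(n/r)$ bound translates to $O(\sqrt{n})$ relevant breakpoints per cell); this follows directly from the grid-cell definition of bivariate step functions given just above Lemma~\ref{lem0:new}. Everything else is routine bookkeeping parallel to the proof of Lemma~\ref{lem1}.
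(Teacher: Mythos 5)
Your proposal is correct and follows essentially the same route as the paper, which proves this lemma simply by repeating the argument of Lemma~\ref{lem1} (Overmars--Yap partition with $r=\Theta(\sqrt{n})$ applied to the $(d-2)$-flats of the boxes and the $(d-1)$-flats at all step-function breakpoints, then Lemma~\ref{lem0:new} per cell at cost $O((\sqrt{n})^2)$). Your added check that the discontinuity loci of the bivariate step functions are axis-parallel hyperplanes is exactly the point that makes the substitution of Lemma~\ref{lem0:new} for Lemma~\ref{lem0} harmless.
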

\begin{proof}
Similar to the proof of Lemma~\ref{lem1}, but using Lemma~\ref{lem0:new}
instead of Lemma~\ref{lem0} as subroutine.
\end{proof}

We now improve Lemma~\ref{lem1:anchored} for 2-sided orthants:

\begin{lemma}\label{lem1:anchored:new}
Let $H$ be a simple function with $O(n)$ complexity.
Given a set $S$ of $n$ 2-sided orthants in $\R^d$ and a box $B_0$,
we can compute the maximum
of $H$ in $B_0-\bigcup S$ in $\OO(n^{d/3+1})$ time  for any constant $d$.
\end{lemma}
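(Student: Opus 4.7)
The plan is to apply Overmars-Yap partitioning via Lemma~\ref{lem:bsp} with parameter $r = n^{1/3}$ to the $O(n)$ $(d-2)$-flats at the corners of the input 2-sided orthants. This produces $O(n^{d/3})$ cells, and inside each cell $\D$ at most $m = O(n/r^2) = O(n^{1/3})$ orthant corners lie. For the orthants whose corner lies outside $\D$, their restriction to $\D$ is 1-sided, so their union simplifies to the complement of a single box $B_\D$ identifiable in $\OO(1)$ time by orthogonal range searching; the remaining $m$ orthants with corners inside $\D$ clip to bona fide 2-sided orthants within $\D$.

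Each cell's subproblem then becomes: maximize the simple function $H$ over $\D \cap B_\D \setminus \bigcup_{i=1}^m s_i$, where the $s_i$ are $m$ 2-sided orthants. My plan is to solve this subproblem in $\OO(m^3) = \OO(n)$ time by an adaptation of the graph-theoretic elimination machinery of Theorem~\ref{thm0:box}: because $H$ is simple, the associated constraint graph $G$ is initially empty, which lets the elimination reduce the effective dimension to $d' \le \lfloor d/2 \rfloor$ while keeping $G$ a forest (rather than merely a pseudo-forest). Applying Lemma~\ref{lem0:new} to the reduced forest-$G$-function costs $O(m^2)$ time, and a further Overmars-Yap partition internal to the cell contributes an extra factor that I expect to be absorbed into $\OO(m)$, yielding the $\OO(m^3)$ per-cell bound. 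Each staircase predicate $[f(x_j) \ge g(x_k)]$ that arises during elimination decomposes into $O(m)$ axis-aligned boxes, which keeps the Lemma~\ref{lem1:new} hypothesis in play if Lemma~\ref{lem0:new} is not directly applicable.

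Multiplying the per-cell bound $\OO(m^3) = \OO(n)$ by the cell count $O(n^{d/3})$ yields the desired total of $\OO(n^{d/3+1})$.

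The main obstacle is establishing the per-cell bound $\OO(m^3)$ uniformly in $d$: the generic $\OO(m^{d/2+1})$ bound obtained by a direct invocation of Lemma~\ref{lem1:new} does not exploit the 2-sided structure of the clipped orthants, and a tighter analysis is needed. I expect this to hinge on the observation that, for a simple $H$, the elimination starts from an empty graph $G$ and therefore produces fewer branches than the pseudo-forest elimination of Theorem~\ref{thm0:box}, enabling a tighter dimensional reduction and in turn a per-cell algorithm whose cost depends polynomially on $m$ with degree bounded by a small constant (here $3$) rather than growing like $d/2+1$.
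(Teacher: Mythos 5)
There is a genuine gap, and it sits exactly where you flag it: the per-cell bound $\OO(m^3)$ uniform in $d$ is not established, and it is in fact implausibly strong. Your outer Overmars--Yap partition does not simplify the structure of the problem at all here: the input orthants are already 2-sided, so after clipping, each cell contains the \emph{same} problem (maximize a simple function over the complement of a union of 2-sided orthants in $\R^d$) with $m=n^{1/3}$ orthants. A $d$-independent $\OO(m^3)$ algorithm for that subproblem, applied directly with $m=n$, would give $\OO(n^3)$ for the whole lemma --- far stronger than the $\OO(n^{d/3+1})$ being proved, and well beyond what the paper's machinery delivers (even Lemma~\ref{lem1:new}, the best available subroutine after dimension reduction, costs $\OO(m^{d'/2+1})$ with $d'$ as large as $2d/3$). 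If instead you close the recursion honestly, $T(n)=n^{d/3}\,T(n^{1/3})+\cdots$ solves to $T(n)=n^{d/2}$, i.e., the outer partition buys nothing. Your hope that the elimination keeps $G$ a forest and reduces the dimension to $\down{d/2}$ also does not hold for this lemma: eliminating $x_i$ introduces a genuinely bivariate factor $\min_{g(x_k)\le x\le f(x_j)} h_i(x)$ on a \emph{new} edge $jk$, so vertices accumulate degree and the graph does not stay a matching for free.

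The missing idea is the paper's accounting scheme, which works directly on all $n$ orthants with no outer partition. One maintains a \emph{generalized} $G$-function (bivariate step functions on edges), starting from $d$ isolated vertices. Eliminating an isolated variable $x_i$ (Case~1) removes vertex $i$ and adds edge $jk$; when some vertex reaches degree $\ge 2$ (Case~2), one pays a factor of $O(n)$ by enumerating its breakpoint values, removing that vertex and at least two edges. Stopping when $G$ is a matching and applying Lemma~\ref{lem1:new} in dimension $d'$, the counts $s$ (Case~1) and $t$ (Case~2) satisfy $d-s-t\le 2(s-2t)$, hence $s\ge d/3+t$ and $d'\le 2d/3-2t$, so the total cost $\OO(n^t\cdot n^{d'/2+1})\le\OO(n^{d/3+1})$. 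It is precisely this trade-off between the branching factor $n^t$ and the residual dimension $d'$ that produces the exponent $d/3+1$; without it, the claimed bound does not follow.
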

\begin{proof}
We maintain a generalized $G$-function $H$.
Initially, $G$ consists of $d$ isolated vertices.  We repeatedly find variables $x_i$ to eliminate:

\begin{itemize}
\item 
{\sc Case 1:}  there is an isolated index $i$ in $G$.
As before, we rewrite the expression $E$ as a disjunction of $O(1)$
subexpressions, where in each subexpression, only two occurrences of $x_i$ remain---in a predicate of the form
$[x_i\le f(x_j)]$, and another predicate of the form $[x_i\ge g(x_k)]$.

We branch off to maximize $H$ for each of these subexpressions separately.
In such a subexpression, to eliminate the variable $x_i$ while maximizing $H$,
we replace the two predicates $[x_i\le f(x_j)]$ and $[x_i\ge g(x_k)]$ with $[f(x_j)\ge g(x_k)]$,
and replace $h_i(x_i)$ with $h'(x_j,x_k) := \min_{x: g(x_k)\le x\le f(x_j)} h_i(x)$ in $H$.  
We remove $i$ from $G$, and add edge $jk$ to $G$.

\item
{\sc Case 2:} there is an index $i$ of degree at least 2 in $G$.
We try out all $O(n)$ different settings of $x_i$ (breakpoints
of the step functions), and obtain $O(n)$ instances in which
$i$ is removed from~$G$.
\end{itemize}

We stop a branch when neither case is applicable, i.e., all indices in $G$ have
degree 1, i.e., $G$ is a matching.  Here, we can apply Lemma~\ref{lem1:new} to solve the problem.

Consider one branch.
Suppose Case 1 is applied $s$ times and Case 2 is applied $t$ times.
At the end, the number of vertices in $G$ is $d':=d-s-t$, and
the number of edges in $G$ is at most $s-2t$ (since Case 1 adds one
edge and Case 2 removes at least two edges).
Since $G$ is a matching at the end, the number of vertices is twice the number of edges.  Thus,
$d-s-t\le 2(s-2t)$, i.e., $s\ge d/3+t$, i.e., $d'\le 2d/3-2t$.
Since $O(n^t)$ instances are generated and Lemma~\ref{lem1:new}
has cost $O(n^{d'/2+1})$,
the total cost is $\OO(n^t\cdot n^{d'/2+1})
\le \OO(n^{d/3+1})$.
\end{proof}

The above lemma improves the time bound in Theorem~\ref{thm0:anchored}
to $O(n^{\down{d/2}/3+1})$.
This in turn improves the time bound in Corollary~\ref{thm:anchored}
to $O(n^{d/3}\cdot (n^{1/3})^{\down{d/2}/3+1}) 
= O(n^{(d+1)/3 + \down{d/2}/9})$.

\begin{corollary}
Given $n$ points in $\R^d$ and a box $B_0$,
we can compute the maximum-volume empty anchored box inside $B_0$ in $\OO(n^{(d+1)/3 + \down{d/2}/9})\le \OO(n^{(7d+6)/18})$ time for any constant~$d\ge 3$.
\end{corollary}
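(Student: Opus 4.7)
The plan is to combine the reduction from the largest empty anchored box problem to Problem~\ref{prob:anchored} (already done at the beginning of Section~\ref{sec:anchored}) with the improved 2-sided-orthant bound of Lemma~\ref{lem1:anchored:new}, plugged into the Overmars–Yap partitioning scheme (Lemma~\ref{lem:bsp}). So the proof is essentially a re-run of the proof of Corollary~\ref{thm:anchored}, but with the improved subroutine. The input objective function is the simple function $\Hvol(x_1,\ldots,x_d)=x_1\cdots x_d$, which has complexity $O(n)$ after discretizing to the $O(n)$ breakpoint coordinates, so the hypotheses of the subroutines are met.

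First I would apply Lemma~\ref{lem:bsp} with $r:=\Theta(n^{1/3})$ to the $O(n)$ $(d-3)$-flats and $(d-2)$-flats through the $(d-3)$-faces and $(d-2)$-faces of the given orthants. This produces $O(r^d)=O(n^{d/3})$ cells covering $B_0$, each intersecting $O(n/r^3)=O(1)$ $(d-3)$-flats (which, after a standard trimming of $r$, may be taken to be $0$) and $O(n/r^2)=O(n^{1/3})$ $(d-2)$-flats. Thus inside each cell $\D$, every orthant of $S$ is restricted to being either 2-sided (at most $O(n^{1/3})$ of them) or 1-sided. The 1-sided orthants can be identified by orthogonal range searching/intersection queries~\cite{AgaEriSURV,BerBOOK}, and their union simplifies to the complement of a single box, which can be intersected into $\D$.

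Now inside each cell I would invoke Lemma~\ref{lem1:anchored:new} on the $O(n^{1/3})$ remaining 2-sided orthants, with $H=\Hvol$ restricted appropriately, obtaining a per-cell cost of $\OO\bigl((n^{1/3})^{\down{d/2}/3+1}\bigr)$. Summing over the $O(n^{d/3})$ cells yields total time
\[
O(n^{d/3})\cdot \OO\bigl((n^{1/3})^{\down{d/2}/3+1}\bigr)\:=\:\OO\bigl(n^{(d+1)/3+\down{d/2}/9}\bigr).
\]
Finally, to verify the stated closed form $\OO(n^{(7d+6)/18})$, I would use the trivial estimate $\down{d/2}/9\le d/18$, giving
\[
(d+1)/3+\down{d/2}/9 \:\le\: \tfrac{6(d+1)}{18}+\tfrac{d}{18} \:=\: \tfrac{7d+6}{18},
\]
which completes the proof.

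The main obstacle was already overcome in Lemma~\ref{lem1:anchored:new}, where the clever graph-theoretic variable elimination (isolated vertices in Case~1 add an edge $jk$ to $G$ while high-degree vertices in Case~2 are eliminated by brute-force enumeration) trades off the two exponents to reach the $n^{d/3+1}$ bound for 2-sided orthants. Given that lemma, the present corollary is a routine plug-in: there is no new combinatorial or algorithmic difficulty, only the arithmetic of balancing exponents and the observation that $\Hvol$ satisfies the simple-function hypothesis of Lemma~\ref{lem1:anchored:new}.
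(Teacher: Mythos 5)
There is a genuine gap in the per-cell cost. Lemma~\ref{lem1:anchored:new}, applied inside a cell $\D$ in the ambient dimension $d$ to the $m=O(n^{1/3})$ surviving 2-sided orthants, gives cost $\OO(m^{d/3+1})=\OO((n^{1/3})^{d/3+1})$, not the $\OO((n^{1/3})^{\down{d/2}/3+1})$ that you claim. The exponent $\down{d/2}/3+1$ does not come from Lemma~\ref{lem1:anchored:new} alone: it requires first running the variable-elimination machinery from the proof of Theorem~\ref{thm0:anchored} (rewrite the complement of the union of 2-sided orthants as a disjunction of $O(1)$ subexpressions, eliminate at least $\up{d/2}$ free variables by substituting $x_i\mapsto f(x_j)$ into the simple function $H$, which keeps $H$ simple), thereby reducing to subproblems in dimension $d'\le\down{d/2}$ that are still 2-sided-orthant instances, and only then invoking Lemma~\ref{lem1:anchored:new} in dimension $d'$. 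This is exactly how the paper gets the ``improved Theorem~\ref{thm0:anchored}'' bound $\OO(m^{\down{d/2}/3+1})$, which is then plugged into the Overmars--Yap partition as in Corollary~\ref{thm:anchored}. Your outer structure (the partition with $r=\Theta(n^{1/3})$, the treatment of 1-sided orthants, the count of $O(n^{d/3})$ cells, and the final arithmetic including $\down{d/2}/9\le d/18$) matches the paper; only this intermediate dimension-halving step is missing.

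The omission is not cosmetic: with the cost that actually follows from your invocation, the total would be $O(n^{d/3})\cdot\OO\bigl((n^{1/3})^{d/3+1}\bigr)=\OO(n^{4d/9+1/3})$, and since $4d/9>7d/18$, this is strictly worse than the claimed $\OO(n^{(7d+6)/18})$. So as written the argument does not establish the stated bound; you need to insert the Theorem~\ref{thm0:anchored}-style reduction from dimension $d$ to $\down{d/2}$ before calling Lemma~\ref{lem1:anchored:new}.
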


}

\end{document}